\DeclareSymbolFont{symbolsC}{U}{txsyc}{m}{n}
\DeclareMathSymbol{\multimapinv}{\mathrel}{symbolsC}{18}
\DeclareMathSymbol{\multimapdot}{\mathrel}{symbolsC}{20}
\DeclareMathSymbol{\multimapdotinv}{\mathrel}{symbolsC}{21}
\DeclareMathSymbol{\multimap}{\mathrel}{AMSa}{40}
\newtheorem{observation}[thm]{Observation}
\newcommand{\A}{\mathbb{A}}
\newcommand{\B}{\mathfrak{B}}
\newcommand{\D}{\mathbb{D}}
\newcommand{\X}{\mathbb{X}}
\newcommand{\bh}{\mathbf{H}}
\newcommand{\bv}{\mathbf{V}}
\newcommand{\op}{\mathsf{op}}
\newcommand{\ex}[1]{{#1}^{\circ\bullet}}
\newcommand{\corner}[1]{{\text{}^\ulcorner_\llcorner\!{#1}\!_\lrcorner^\urcorner}}
\newcommand{\alice}{\texttt{Alice}}
\newcommand{\bob}{\texttt{Bob}}
\providecommand{\leftsquigarrow}{%
  \mathrel{\mathpalette\reflect@squig\relax}%
}
\newcommand{\reflect@squig}[2]{%
  \reflectbox{$\m@th#1\rightsquigarrow$}%
}
\begin{document}

\title{Concurrent Process Histories and Resource Transducers}
\author[C.~Nester]{Chad Nester}
\address{Tallinn University of Technology}
\thanks{This research was supported by the ESF funded Estonian IT Academy research measure (project 2014-2020.4.05.19-0001). This paper is a revised and extended version of~\cite{Nester2021Concurrent}.}

\begin{abstract}
We identify the algebraic structure of the material histories generated by concurrent processes. Specifically, we extend existing categorical theories of resource convertibility to capture concurrent interaction. Our formalism admits an intuitive graphical presentation via string diagrams for proarrow equipments. We also consider certain induced categories of resource transducers, which are of independent interest due to their unusual structure.
\end{abstract}

\maketitle

\section{Introduction}

Concurrent systems are abundant in computing, and indeed in the world at large. Despite the large amount of attention paid to the modelling of concurrency in recent decades (e.g.,~\cite{Hoa78,Mil80, Pet96, Mil99, Abr14}), a canonical mathematical account has yet to emerge, and the basic structure of concurrent systems remains elusive.

In this paper we present a basic structure that captures what we will call the \emph{material} aspect of concurrent systems: As a process unfolds in time it leaves behind a material history of effects on the world, like the way a slug moving through space leaves a trail of slime. This slime is captured in a natural way by \emph{resource theories} in the sense of~\cite{Coe14}, in which morphisms of symmetric monoidal categories --- conveniently expressed as string diagrams --- are understood as transformations of resources.

\[
\includegraphics[height=3cm,align=c]{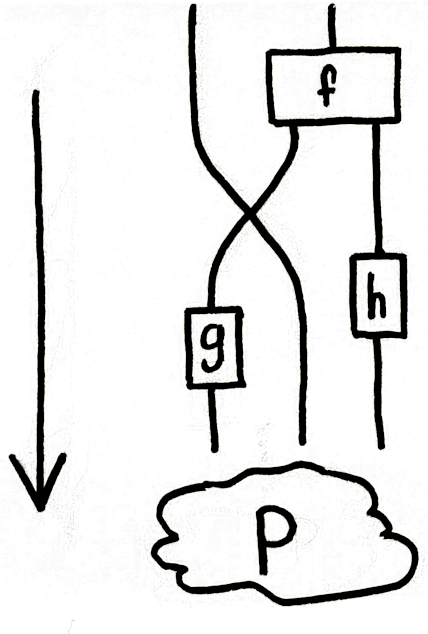}
\hspace{1cm}
\leftrightsquigarrow
\hspace{1cm}
\includegraphics[height=3cm,align=c]{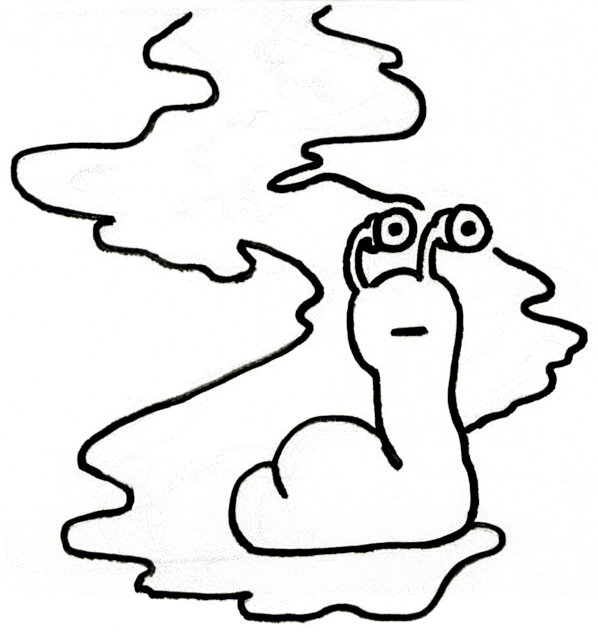}
\]

From the resource theoretic perspective, objects of a symmetric monoidal category are understood as collections of resources, with the unit object denoting the empty collection and the tensor product of two collections consisting of their combined contents. Morphisms are understood as ways to transform one collection of resources into another, which may be combined sequentially via composition, and in parallel via the tensor product. For example, the process of baking bread might generate the following material history:
\[
\includegraphics[height=3cm,align=c]{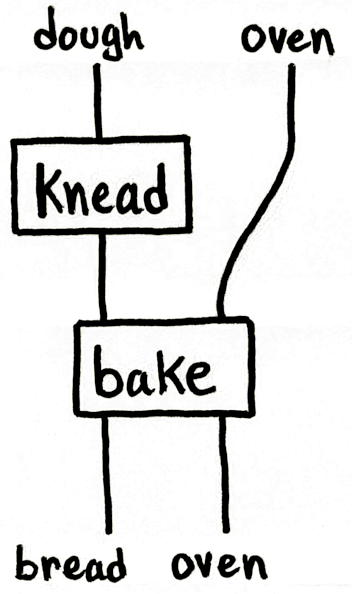}
\]
meaning that the baking process involved kneading dough and baking it in an oven to obtain bread (and also the oven).

This approach to expressing the material history of a process has many advantages: It is general, in that it assumes minimal structure; canonical, in that monoidal categories are well-studied as mathematical objects; and relatively friendly, as it admits an intuitive graphical calculus (string diagrams). However, it is unable to capture the interaction between components of a concurrent process. For example, consider our hypothetical baking process and suppose that the kneading and baking of the dough are handled by separate subsystems, with control of the dough being handed to the baking subsystem once the kneading is complete. Such interaction of parts is a fundamental aspect of concurrency, but is not expressible in this framework --- we can only describe the effects of the system as a whole.

We remedy this by extending a given resource theory to allow the decomposition of material histories into concurrent components. Specifically, we augment the string diagrams for symmetric monoidal categories with \emph{corners}, through which resources may flow between different components of a transformation.

\[
\includegraphics[height=3cm,align=c]{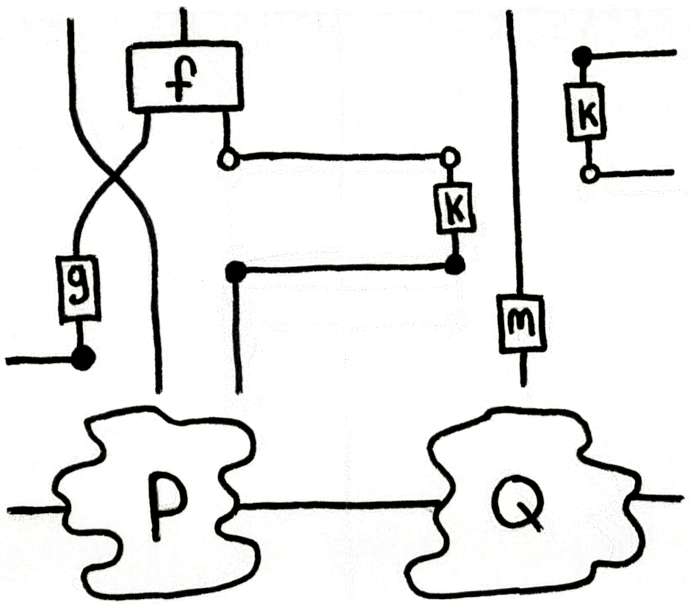}
\hspace{1cm}
\leftrightsquigarrow
\hspace{1cm}
\includegraphics[height=3cm,align=c]{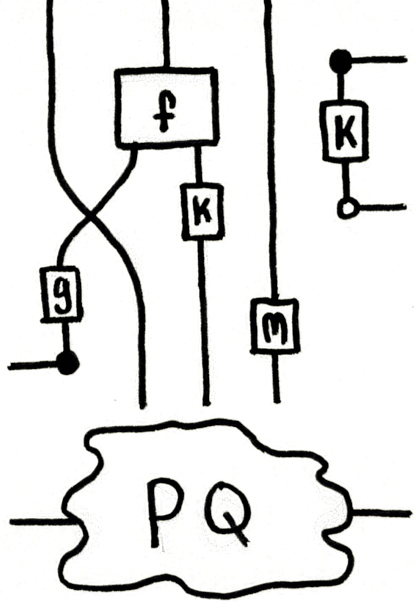}
\]
Returning to our baking example, we might express the material history of the kneading and baking subsystems \emph{separately} with the following diagrams, which may be composed horizontally to obtain the material history of the baking process as a whole.
\[
\includegraphics[height=1.7cm,align=c]{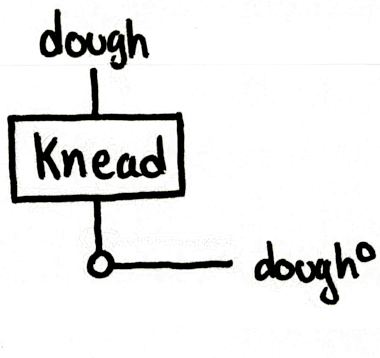}
\hspace{1cm}
\includegraphics[height=1.7cm,align=c]{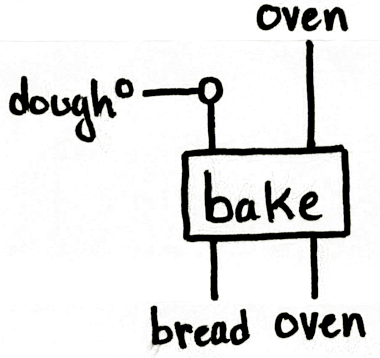}
\]

These augmented diagrams denote cells of a single-object double category constructed from the original resource theory. The corners make this double category into a proarrow equipment, which turns out to be all the additional structure we need in order to express concurrent interaction. From only this structure, we obtain a theory of exchanges --- a sort of minimal system of behavioural types --- that conforms to our intuition about how such things ought to work remarkably well.

Our approach to these concurrent material histories retains the aforementioned advantages of the resource-theoretic perspective: We lose no generality, since our construction applies to any resource theory; It is canonical, with proarrow equipments being a fundamental structure in formal category theory --- although not usually seen in such concrete circumstances; Finally, it remains relatively friendly, since the string diagrams for monoidal categories extend in a natural way to string diagrams for proarrow equipments~\cite{Mye16}.

Every single-object double category defines two monoidal categories: one composed of cells with trivial left and right boundary, and one composed of cells with trivial top and bottom boundary. For the double category obtained by adding corners to a resource theory the induced monoidal categories are, respectively, the resource theory itself and a category of \emph{resource transducers} --- being an alternative interpretation of concurrent transformations that neither begin nor end with any resources. This category of resource transducers is rich in structure, exhibiting unusual features that make it an interesting object of study in its own right. We establish some elementary properties of this category and axiomatize it directly --- that is, we give a monoidal signature and a collection of equations that characterize the category of resource transducers.

This paper is an extended version of~\cite{Nester2021Concurrent}, including additional examples and an exploration of the aforementioned categories of resource transducers.

\subsection{Contributions and Related Work}

\emph{Related Work}. Monoidal categories are ubiquitous --- if often implicit --- in theoretical computer science. An example from the theory of concurrency is~\cite{Mes90}, in which monoidal categories serve a purpose similar to their purpose here. String diagrams for monoidal categories seem to have been invented independently a number of times, but until recently were uncommon in printed material due to technical limitations. The usual reference is~\cite{Joy91}. We credit the resource-theoretic interpretation of monoidal categories and their string diagrams to~\cite{Coe14}. Double categories first appear in~\cite{Ehr63}. Free double categories are considered in~\cite{Daw02} and again in~\cite{Fio08}. The idea of a proarrow equipment first appears in~\cite{Woo82}, albeit in a rather different form. Proarrow equipments have subsequently appeared under many names in formal category theory (see e.g.,~\cite{Shu08,Gra04}). String diagrams for double categories and proarrow equipments are treated precisely in~\cite{Mye16}. We have been inspired by work on message passing and behavioural types, in particular~\cite{Coc09}, from which we have adopted our notation for exchanges.

\noindent \emph{Contributions}. The main contribution of this paper is the resource-theoretic interpretation of the free cornering and the observation that it captures the structure of concurrent process histories. Other contributions concern the categorical structure of the free cornering of a resource theory: we show that it has \emph{crossing cells} and is consequently a monoidal double category in Lemma~\ref{lem:technicalcrossing} and Lemma~\ref{lem:monoidaldouble}, argue that the vertical cells are the original monoidal category in Proposition~\ref{prop:verticaloriginal}, show that the induced monoidal category of horizontal cells can be understood as a category of \emph{resource transducers}, and establish Lemma~\ref{lem:spatial}, Lemma~\ref{lem:moraliso}, Observation~\ref{obs:halfsym}, Lemma~\ref{lem:strongmonoidal}, Lemma~\ref{lem:involution}, and Proposition~\ref{prop:linearactegory} --- all of which concern the structure of this category of horizontal cells. Finally, we give an axiomatization of the category of horizontal cells in terms of equations over a monoidal signature in Section~\ref{sec:axiomscheme}. The original contributions of this paper over~\cite{Nester2021Concurrent} are Lemma~\ref{lem:spatial}, Lemma~\ref{lem:strongmonoidal}, Lemma~\ref{lem:involution}, Proposition~\ref{prop:linearactegory}, and the axiom scheme of Section~\ref{sec:axiomscheme}.

\subsection{Organization and Prerequisites}

\emph{Prerequisites}. This paper is largely self-contained, but we assume some familiarity with category theory, in particular with monoidal categories and their string diagrams. Some good references are~\cite{Mac71,Sel10,Fon18}.

\noindent \emph{Organization}. In Section~\ref{sec:resourcetheories} we review the resource-theoretic interpretation of symmetric monoidal categories. We continue by reviewing the theory of double categories in Section~\ref{sec:singledouble}, specialized to the single object case. In Section~\ref{sec:corneringcrossing} we recall the notion of proarrow equipment, introduce the free cornering of a resource theory, and exhibit the existence of crossing cells in the free cornering. In Section~\ref{sec:interpretation} we show how the free cornering of a resource theory inherits its resource-theoretic interpretation while enabling the concurrent decomposition of resource transformations. In Section~\ref{sec:transducers} we consider the category of resource transducers and investigate its structure, and in Section~\ref{sec:axiomscheme} we give an axiom scheme for it. In Section~\ref{sec:conclusion} we conclude and consider directions for future work.

\section{Monoidal Categories as Resource Theories}\label{sec:resourcetheories}

Symmetric strict\footnote{We work with strict monoidal categories for the sake of convenience and readability. We expect the present development to apply equally well to the general case, and if pressed would appeal to the coherence theorem for monoidal categories~\cite{Mac71}.} monoidal categories can be understood as theories of resource transformation. Objects are interpreted as collections of resources, with $A \otimes B$ the collection consisting of both $A$ and $B$, and $I$ the empty collection. Arrows $f : A \to B$ are understood as ways to transform the resources of $A$ into those of $B$. We call symmetric strict monoidal categories \emph{resource theories} when we have this sort of interpretation in mind.

For example, let $\mathfrak{B}$ be the free symmetric strict monoidal category with generating objects
\[
\{ \texttt{bread},\texttt{dough},\texttt{water},\texttt{flour},\texttt{oven}\}
\]
and with generating arrows
\begin{mathpar}

  \texttt{mix} : \texttt{water} \otimes \texttt{flour} \to \texttt{dough}

  \texttt{knead} : \texttt{dough} \to \texttt{dough}

  \texttt{bake} : \texttt{dough} \otimes \texttt{oven} \to \texttt{bread} \otimes \texttt{oven}

\end{mathpar}
subject to no equations. $\mathfrak{B}$ can be understood as a resource theory of baking bread. The arrow $\texttt{mix}$ represents the process of combining water and flour to form a bread dough, \texttt{knead} represents kneading dough, and \texttt{bake} represents baking dough in an oven to obtain bread (and an oven).

The structure of symmetric strict monoidal categories provides natural algebraic scaffolding for composite transformations. For example, consider the following arrow of $\mathfrak{B}$:
\[
(\texttt{bake} \otimes 1_{\texttt{dough}});(1_\texttt{bread} \otimes \sigma_{\texttt{oven},\texttt{dough}};\texttt{bake})
\]
of type
\[
\texttt{dough} \otimes \texttt{oven} \otimes \texttt{dough} \to \texttt{bread} \otimes \texttt{bread} \otimes \texttt{oven}
\]
where $\sigma_{A,B} : A \otimes B \stackrel{\sim}{\to} B \otimes A$ is the braiding. This arrow describes the transformation of two units of dough into loaves of bread by baking them one after the other in an oven.

It is often more intuitive to write composite arrows like this as string diagrams: Objects are depicted as wires, and arrows as boxes with inputs and outputs. Composition is represented by connecting output wires to input wires, and we represent the tensor product of two morphisms by placing them beside one another. Finally, the braiding is represented by crossing the wires involved. For the morphism discussed above, the corresponding string diagram is:
\[
\includegraphics[height=3.5cm]{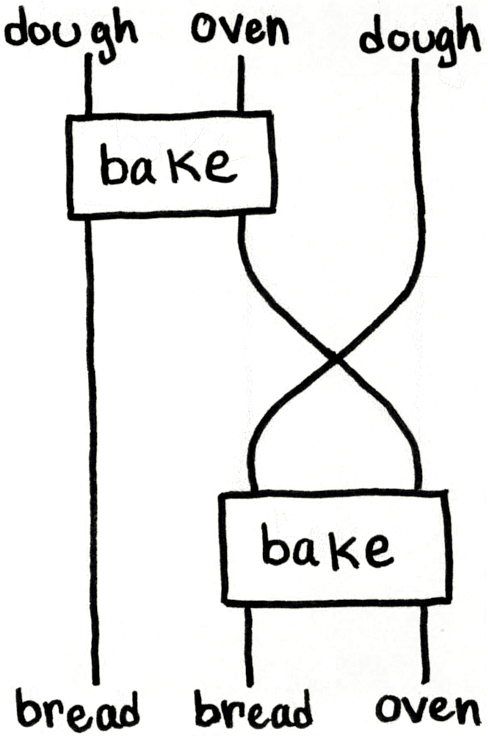}
\]
Notice how the topology of the diagram captures the logical flow of resources.

Given a pair of parallel arrows $f,g : A \to B$ in some resource theory, both $f$ and $g$ are ways to obtain $B$ from $A$, but they may not have the same effect on the resources involved. We explain by example: Consider the parallel arrows $1_{\texttt{dough}},\texttt{knead} : \texttt{dough} \to \texttt{dough}$ of $\mathfrak{B}$. Clearly these should not be understood to have the same effect on the dough in question, and this is reflected in $\mathfrak{B}$ by the fact that they are not made equal by its axioms. Similarly, $\texttt{knead}$ and $\texttt{knead} \circ \texttt{knead}$ are not equal in $\mathfrak{B}$, which we understand to mean that kneading dough twice does not have the same effect as kneading it once, and that in turn any $\texttt{bread}$ produced from twice-kneaded dough will be different from once-kneaded bread in our model.

Consider a hypothetical resource theory constructed from $\mathfrak{B}$ by imposing the equation $\texttt{knead} \circ \texttt{knead} = \texttt{knead}$. In this new setting we understand kneading dough once to have the same effect as kneading it twice, three times, and so on, because the corresponding arrows are all equal. Of course, the sequence of events described by $\texttt{knead}$ is not the one described by $\texttt{knead} \circ \texttt{knead}$: In the former the dough has been kneaded only once, while in the latter it has been kneaded twice. The equality of the two string diagrams indicates that these two different processes would have the same effect on the dough involved. We adopt as a general principle in our design and understanding of resource theories that transformations should be equal as morphisms if and only if they have the same effect on the resources involved.

For the sake of further illustration, observe that by naturality of the braiding maps the following two resource transformations are equal in $\mathfrak{B}$:
\[
\includegraphics[height=7cm]{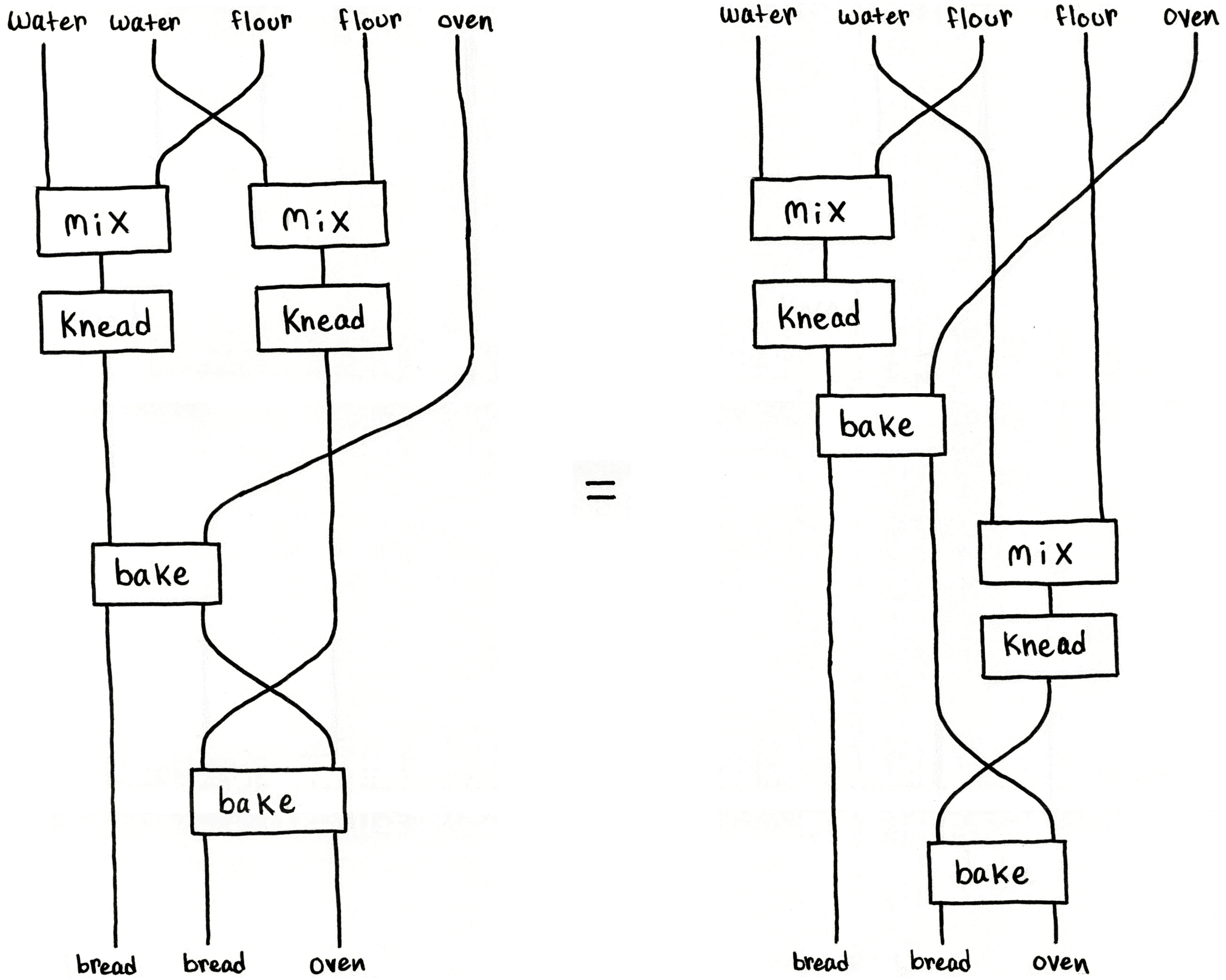}
\]
Each transformation gives a method of baking two loaves of bread. On the left, two batches of dough are mixed and kneaded before being baked one after the other. On the right, first one batch of dough is mixed, kneaded and baked and only then is the second batch mixed, kneaded, and baked. Their equality tells us that, according to $\mathfrak{B}$, the two procedures will have the same effect, resulting in the same bread when applied to the same ingredients with the same oven.

\section{Single-Object Double Categories}\label{sec:singledouble}

In this section we set up the rest of our development by presenting the theory of \emph{single-object double categories}, being those double categories $\D$ with exactly one object. In this case $\D$ consists of a \emph{horizontal edge monoid} $\D_H = (\D_H, \otimes, I)$, a \emph{vertical edge monoid} $\D_V = (\D_V, \otimes, I)$, and a collection of \emph{cells}
\[
\includegraphics[height=1.7cm,align=c]{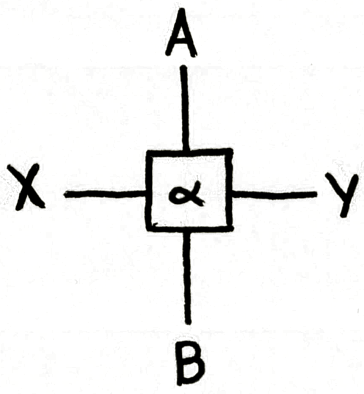}
\]
where $A,B \in \D_H$ and $X,Y \in \D_V$. Given cells $\alpha,\beta$ where the right boundary of $\alpha$ matches the left boundary of $\beta$ we may form a cell $\alpha \vert \beta$ --- their \emph{horizontal composite} --- and similarly if the bottom boundary of $\alpha$ matches the top boundary of $\beta$ we may form $\frac{\alpha}{\beta}$ --- their \emph{vertical composite} --- with the boundaries of the composite cell formed from those of the component cells using $\otimes$. We depict horizontal and vertical composition, respectively, as in:
\[
\includegraphics[height=1.7cm,align=c]{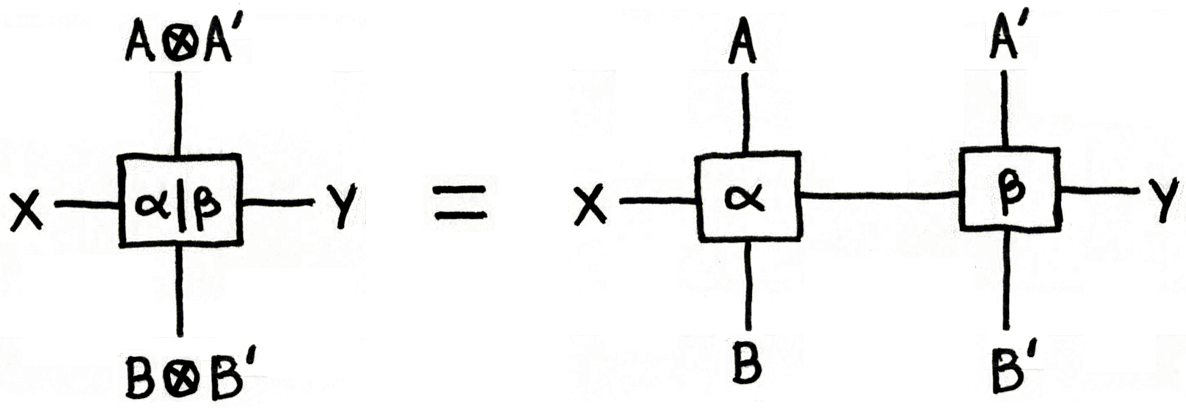}
\hspace{0.5cm}
\text{and}
\hspace{0.5cm}
\includegraphics[height=2.3cm,align=c]{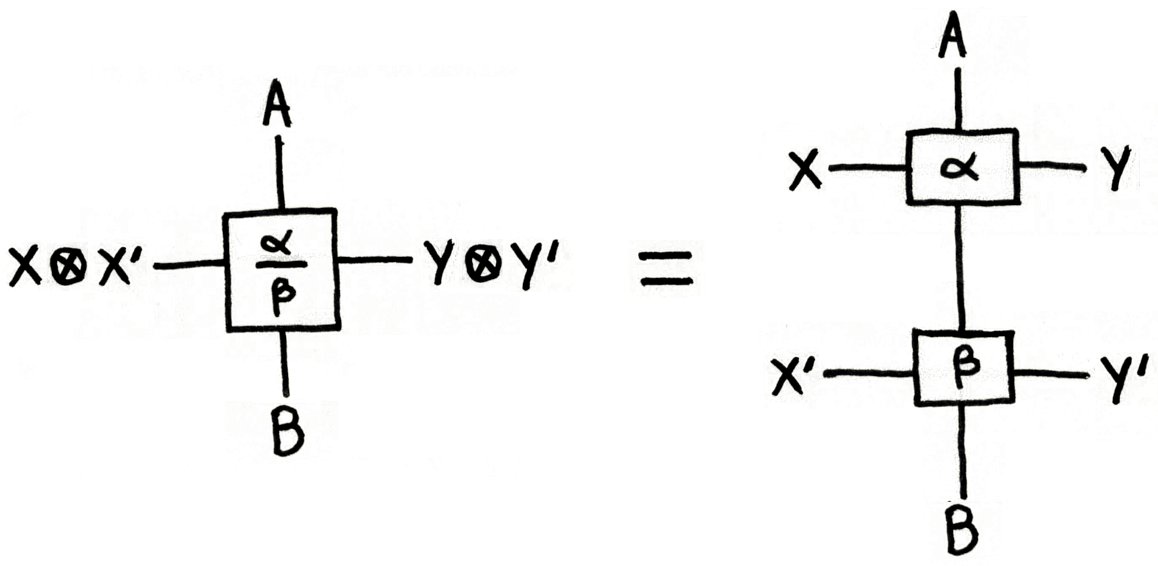}
\]
Horizontal and vertical composition of cells are required to be associative and unital. We omit wires of sort $I$ in our depictions of cells, allowing us to draw horizontal and vertical identity cells, respectively, as in:
\[
\includegraphics[height=1.7cm,align=c]{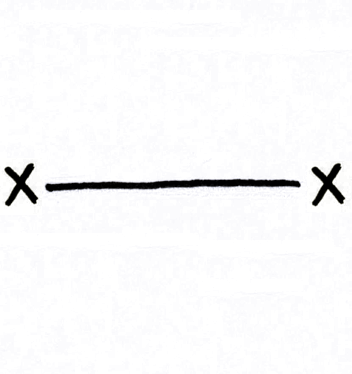}
\hspace{1cm}
\text{and}
\hspace{1cm}
\includegraphics[height=1.7cm,align=c]{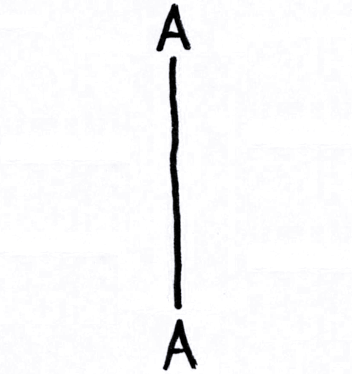}
\]
Finally, the horizontal and vertical identity cells of type $I$ must coincide --- we write this cell as $\square_I$ and depict it as empty space, see below on the left --- and vertical and horizontal composition must satisfy the interchange law. That is, $\frac{\alpha}{\beta}\vert \frac{\gamma}{\delta} = \frac{\alpha \vert \gamma}{\beta \vert \delta}$, allowing us to unambiguously interpret the diagram below on the right:
\[
\includegraphics[height=1.7cm,align=c]{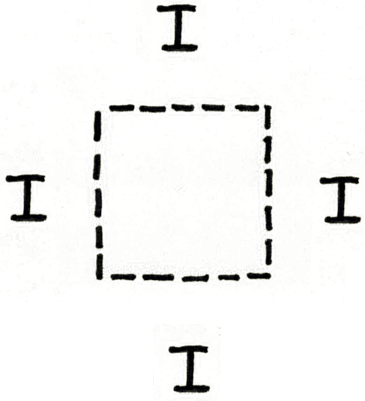}
\hspace{3cm}
\includegraphics[height=2.3cm,align=c]{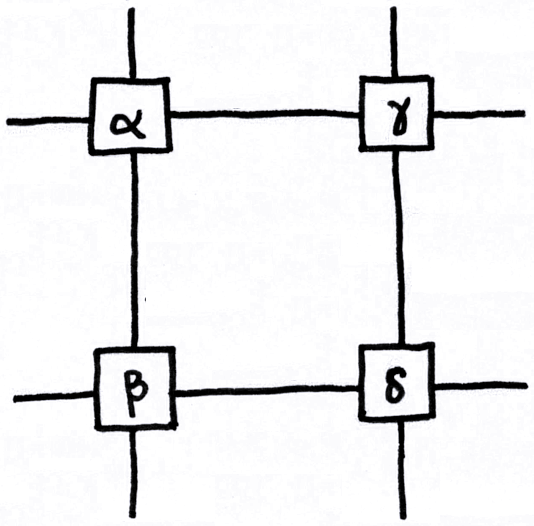}
\]

Every single-object double category $\D$ defines strict monoidal categories $\bv \D$ and $\bh \D$, consisting of the cells for which the $\D_H$ and $\D_V$ valued boundaries respectively are all $I$, as in:
\[
\includegraphics[height=1.7cm,align=c]{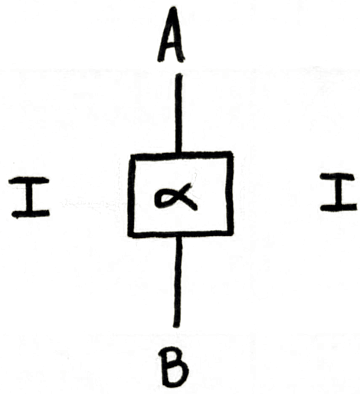}
\hspace{1cm}
\text{and}
\hspace{1cm}
\includegraphics[height=1.7cm,align=c]{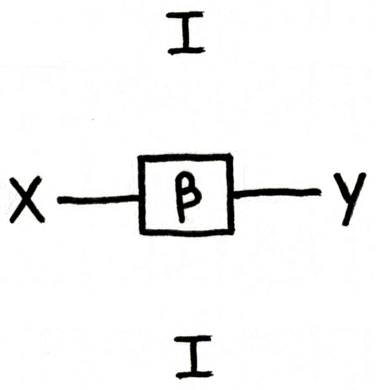}
\]
That is, the collection of objects of $\bv \D$ is $\D_H$, composition in $\bv \D$ is vertical composition of cells, and the tensor product in $\bv \D$ is given by horizontal composition:
\[
\includegraphics[height=1.7cm]{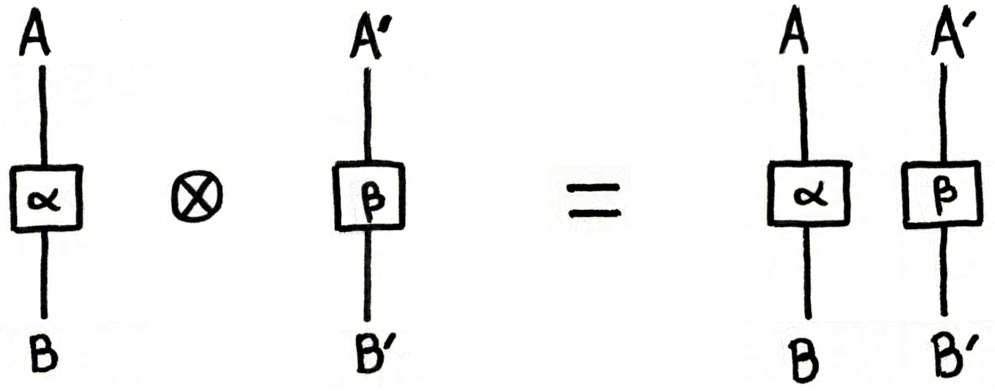}
\]
In this way, $\bv \D$ forms a strict monoidal category, which we call the category of $\emph{vertical cells}$ of $\D$. Similarly, $\bh \D$ is also a strict monoidal category (with collection of objects $\D_V$) which we call the \emph{horizontal cells} of $\D$.

\section{Cornerings and Crossings}\label{sec:corneringcrossing}

In this section we introduce the free cornering of a resource theory, our primary technical device, and show that the free cornering contains special crossing cells with nice formal properties. We begin by recalling the notion of proarrow equipment, specialised to the case of single-object double categories:
\begin{defi}\label{def:equipment}
  Let $\D$ be a single-object double category. $\D$ is called a \emph{proarrow equipment} in case for each $A \in \D_H$ there are distinguished elements $A^\circ$ and $A^\bullet$ of $\D_V$ along with distinguished cells of $\D$:
    \[
      \includegraphics[height=1.7cm,align=c]{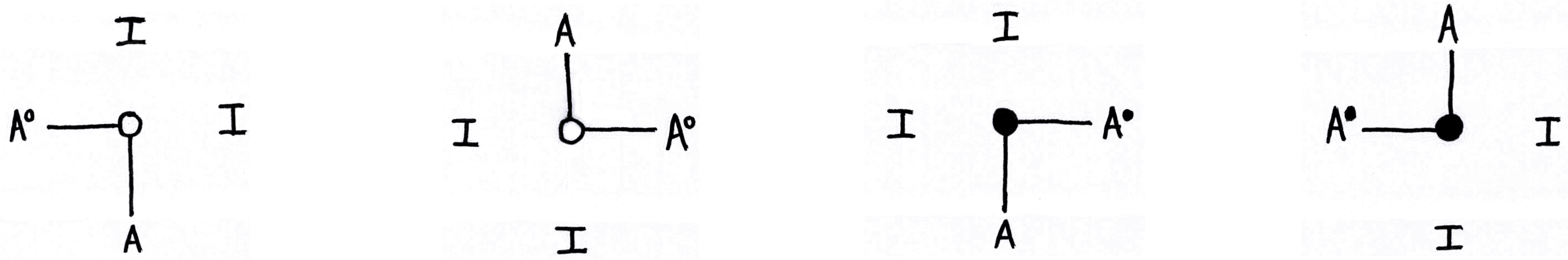}
    \]
    called \emph{$\circ$-corners} and \emph{$\bullet$-corners} respectively, which satisfy the \emph{yanking equations}:
    \[
      \includegraphics[height=1cm,align=c]{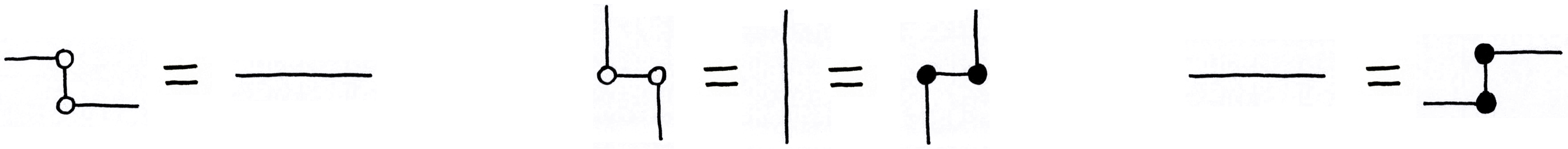}
    \]
\end{defi}

Tersely, the free cornering of a resource theory is the proarrow equipment obtained by freely adding corner cells. Explicitly, we define:
\begin{defi} Let $\A$ be a resource theory. Then the monoid $\ex{\A}$ of \emph{$\A$-valued exchanges} is defined by $\ex{\A} = (\A_0 \times \{\circ,\bullet\})^*$. That is, $\ex{\A}$ is the free monoid on the set $\A_0 \times \{\circ,\bullet\}$ of polarized objects of $\A$, whose elements we write $A^\circ$ and $A^\bullet$. Intuitively, elements of $\ex{\A}$ describe a sequence of resources moving between participants in the exchange, where $A^\circ$ denotes an instance of $A$ moving from left to right, and $A^\bullet$ denotes an instance of $A$ moving from right to left (see Section~\ref{sec:interpretation}).
\end{defi}
\noindent Now the free cornering is given as follows:
\begin{defi} Let $\A$ be a resource theory. Then the \emph{free cornering of $\A$}, written $\corner{\A}$, is the free single-object double category determined by the following data:
\begin{itemize}
  \item The horizontal edge monoid $\corner{\A}_H = (\A_0, \otimes, I)$ is given by the objects of $\A$.
  \item The vertical edge monoid $\corner{\A}_V = \ex{\A}$ is the monoid of $\A$-valued exchanges.
  \item The generating cells consist of corners for each object $A$ of $\A$ as in Definition~\ref{def:equipment}, subject to the yanking equations, along with a vertical cell $\corner{f}$ for each morphism $f : A \to B$ of $\A$ subject to equations as in:
\[
  \includegraphics[height=1.7cm,align=c]{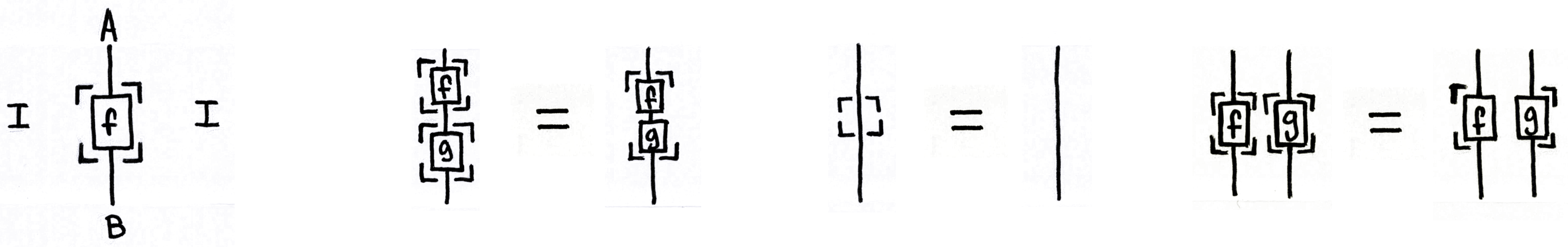}
  \]
\end{itemize}
\end{defi}

\noindent
For a precise development of free double categories see~\cite{Fio08}. In brief: cells are formed from the generating cells by horizontal and vertical composition, subject to the axioms of a double category in addition to any generating equations. We call this the ``free'' cornering both because it is freely generated, and because we imagine there is an adjunction relating proarrow equipments and arbitrary double categories under which $\corner{\A}$ is ``free'' in a more principled sense. We leave the construction of such an adjunction for future work.

An important property of the free cornering is that the vertical cells are the original resource theory:
\begin{prop}\label{prop:verticaloriginal}
There is an isomorphism of categories $\bv\,\corner{\A} \cong \A$.
\end{prop}
\begin{proof}
  Intuitively $\bv\,\corner{\A} \cong \A$ because in a composite vertical cell every wire bent by a corner must eventually be un-bent by the matching corner, which by yanking is the identity. The only other generators are the cells $\corner{f}$, and so any vertical cell in $\corner{A}$ can be written as $\corner{g}$ for some morphism $g$ of $\A$. A more rigorous treatment of corner cells can be found in~\cite{Mye16}, to the same effect.
\end{proof}

Before we properly explain our interest in $\corner{\A}$ we develop a convenient bit of structure: \emph{crossing cells}. For each $B$ of $\corner{\A}_H$ and each $X$ of $\corner{\A}_V$ we define a cell
\[
\includegraphics[height=1.7cm]{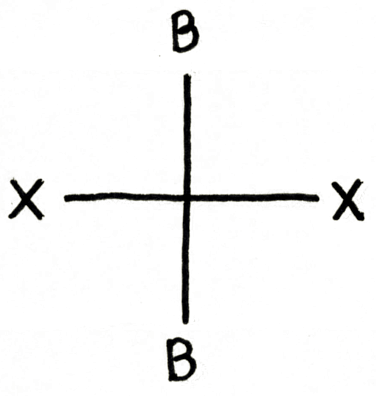}
\]
of $\corner{\A}$ inductively as follows: In the case where $X$ is $A^\circ$ or $A^\bullet$, respectively, define the crossing cell as in the diagrams below on the left and right, respectively:
\[
\includegraphics[height=1.7cm]{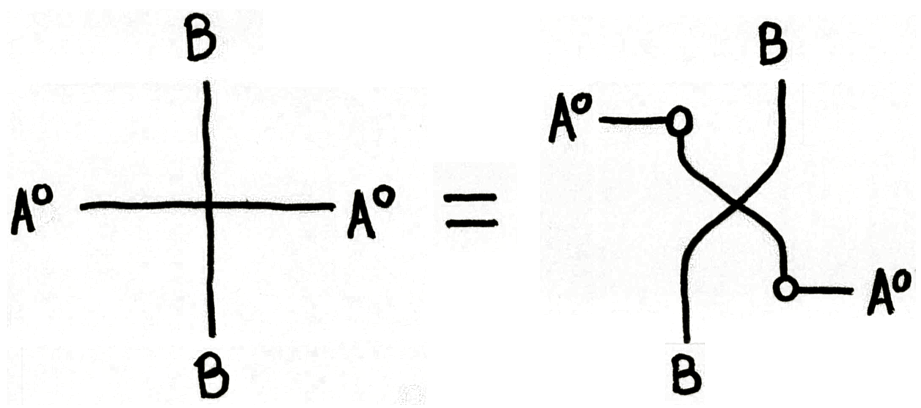}
\hspace{2cm}
\includegraphics[height=1.7cm]{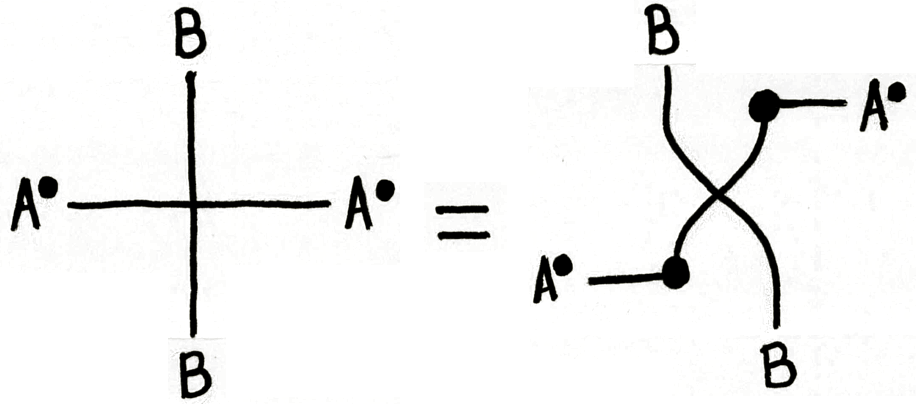}
\]
in the case where $X$ is $I$, define the crossing cell as in the diagram below on the left, and in the composite case define the crossing cell as in the diagram below on the right:
\[
\includegraphics[height=1.7cm]{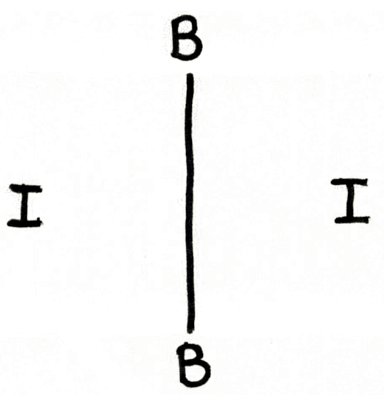}
\hspace{3cm}
\includegraphics[height=1.7cm]{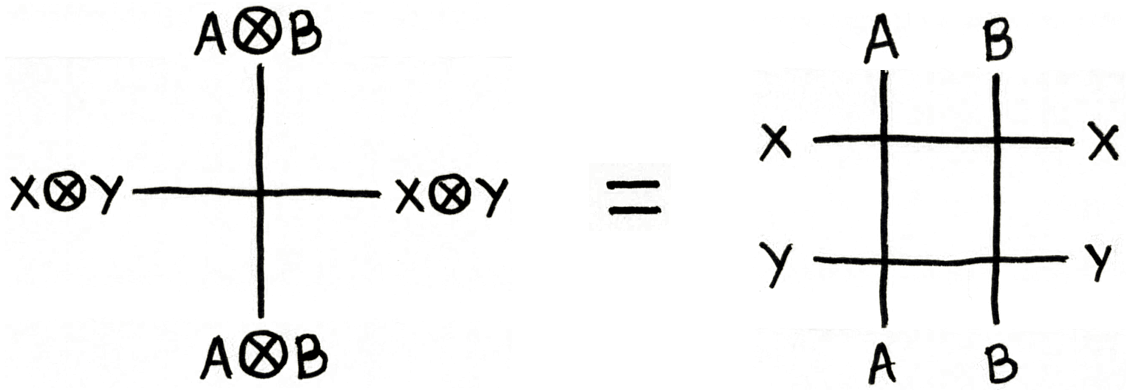}
\]
We prove a technical lemma:
\begin{lem}\label{lem:technicalcrossing}
  For any cell $\alpha$ of $\corner{\A}$ we have
  \[
  \includegraphics[height=1.3cm]{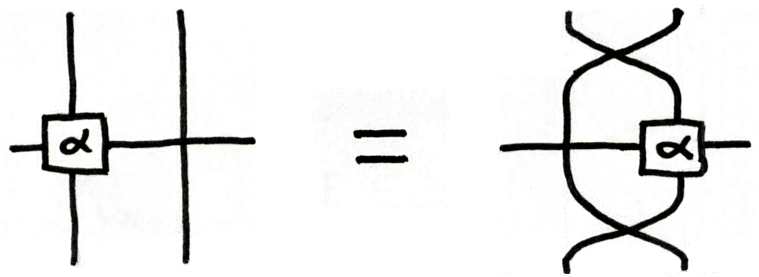}
  \]
\end{lem}
\begin{proof}
  By structural induction on cells of $\corner{\A}$. For the $\circ$-corners we have:
  \[
  \includegraphics[height=1.2cm]{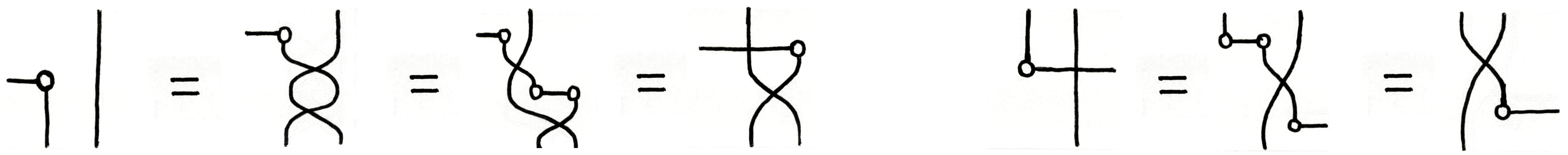}
  \]
  and for the $\bullet$-corners, similarly:
  \[
  \includegraphics[height=1.2cm]{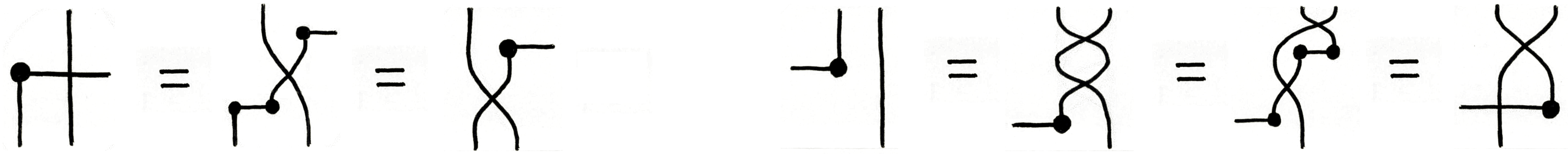}
  \]
  the final base cases are the $\corner{f}$ maps:
  \[
  \includegraphics[height=1.3cm]{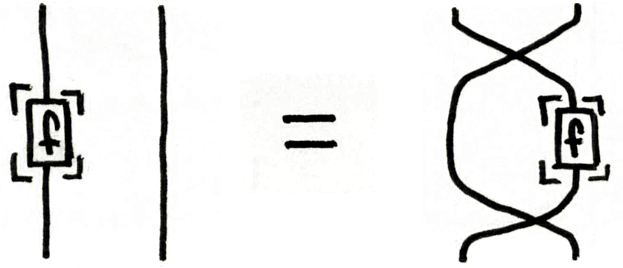}
  \]
  There are two inductive cases. For vertical composition, we have:
  \[
  \includegraphics[height=2.4cm,align=c]{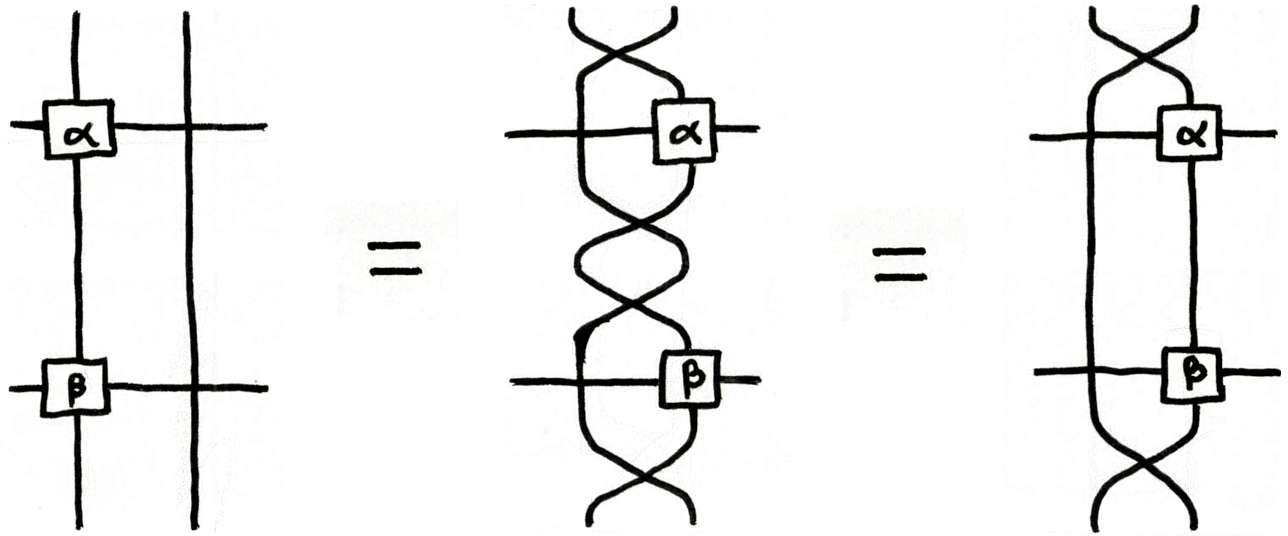}
  \]
  Horizontal composition is similarly straightforward, and the claim follows by induction.
\end{proof}
From this we obtain a ``non-interaction'' property of our crossing cells, similar to the naturality of braiding in symmetric monoidal categories:
\begin{cor} For cells $\alpha$ of $\bv\,\corner{\A}$ and $\beta$ of $\bh\,\corner{\A}$, the following equation holds in $\corner{\A}$:
  \[
  \includegraphics[height=1.5cm]{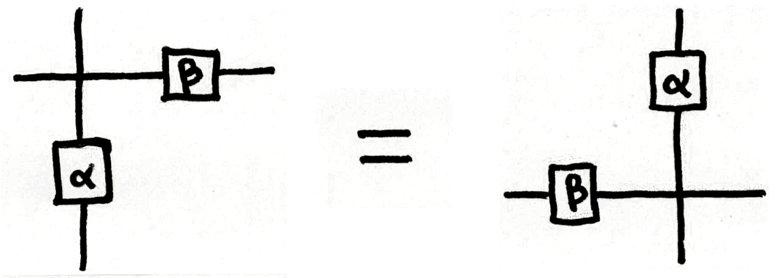}
  \]
\end{cor}

These crossing cells greatly aid in the legibility of diagrams corresponding to cells in $\corner{\A}$, but also tell us something about the categorical structure of $\corner{\A}$, namely that it is a monoidal double category in the sense of~\cite{Shu10}:

\begin{lem}\label{lem:monoidaldouble}
  If $\A$ is a symmetric strict monoidal category then $\corner{\A}$ is a monoidal double category. That is, $\corner{\A}$ is a pseudo-monoid object in the strict 2-category $\bv \mathsf{DblCat}$ of double categories, lax double functors, and vertical transformations.
\end{lem}
\begin{proof}
  We give the action of the tensor product on cells:
  \[
    \includegraphics[height=1.7cm]{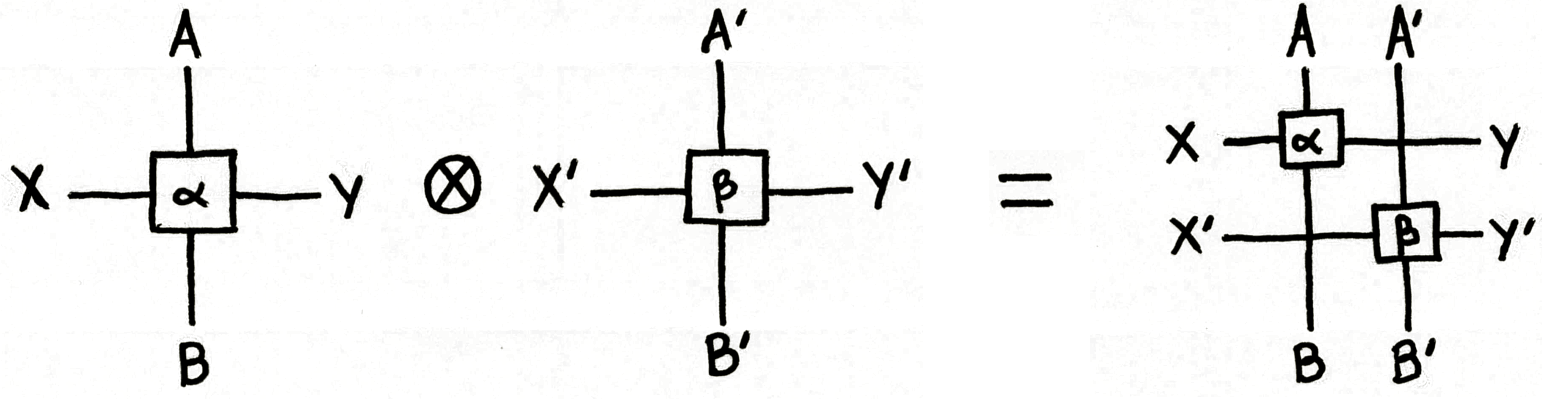}
  \]
  This defines a pseudofunctor, with the component of the required vertical transformation given by exchanging the two middle wires as in:
  \[
    \includegraphics[height=2.5cm]{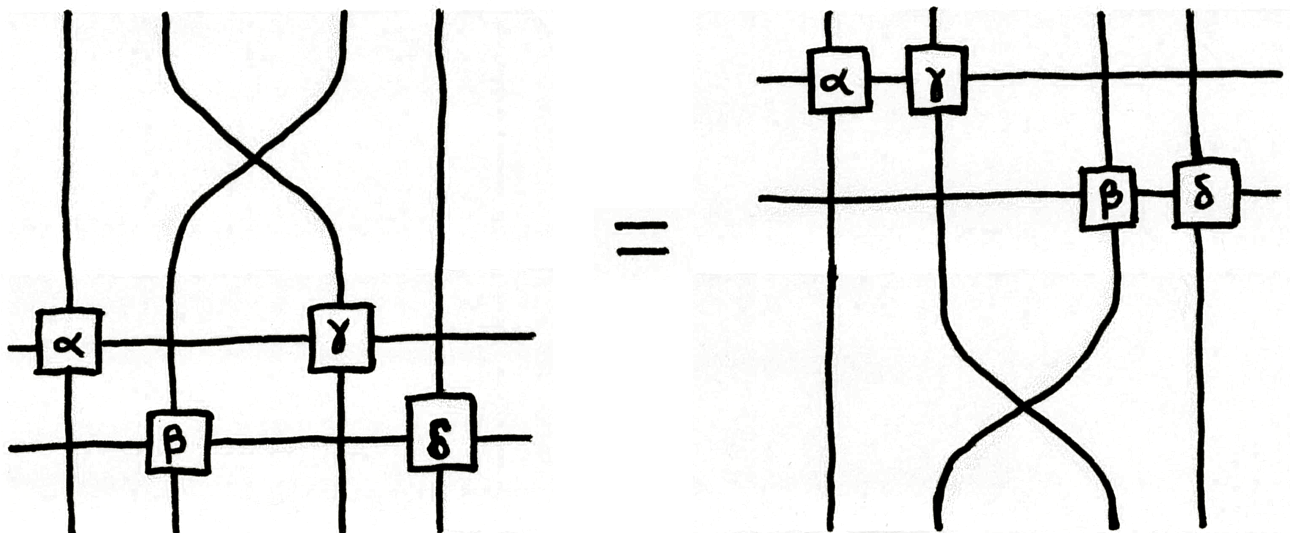}
  \]
  Notice that $\otimes$ is strictly associative and unital, in spite of being only pseudo-functorial.
\end{proof}

\section{Concurrency Through Cornering}\label{sec:interpretation}
We proceed to extend the resource-theoretic interpretation of some symmetric strict monoidal category $\A$ to its free cornering $\corner{\A}$. We interpret elements of $\corner{\A}_V = \ex{\A}$ as \emph{$\A$-valued exchanges}. Each exchange $X_1 \otimes \cdots \otimes X_n$ involves a left participant and a right participant giving each other resources in sequence, with $A^\circ$ indicating that the left participant should give the right participant an instance of $A$, and $A^\bullet$ indicating the opposite. For example say the left participant is $\alice$ and the right participant is $\bob$. Then we can picture the exchange $A^\circ \otimes B^\bullet \otimes C^\bullet$ as:
\[
\alice \rightsquigarrow
\includegraphics[height=1.3cm,align=c]{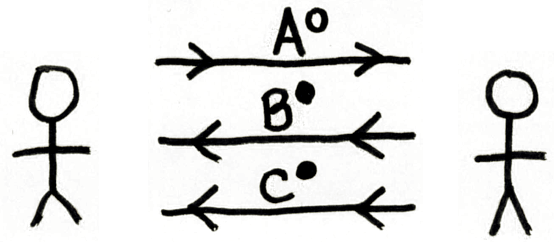}
\leftsquigarrow \bob
\]
Think of these exchanges as happening \emph{in order}. For example the exchange pictured above demands that first $\alice$ gives $\bob$ an instance of $A$, then $\bob$ gives $\alice$ an instance of $B$, and then finally $\bob$ gives $\alice$ an instance of $C$.

We interpret cells of $\corner{\A}$ as \emph{concurrent transformations}. Each cell describes a way to transform the collection of resources given by the top boundary into that given by the bottom boundary, via participating in $\A$-valued exchanges along the left and right boundaries. For example, consider the following cells of $\corner{\B}$:
\[
\includegraphics[height=2.7cm]{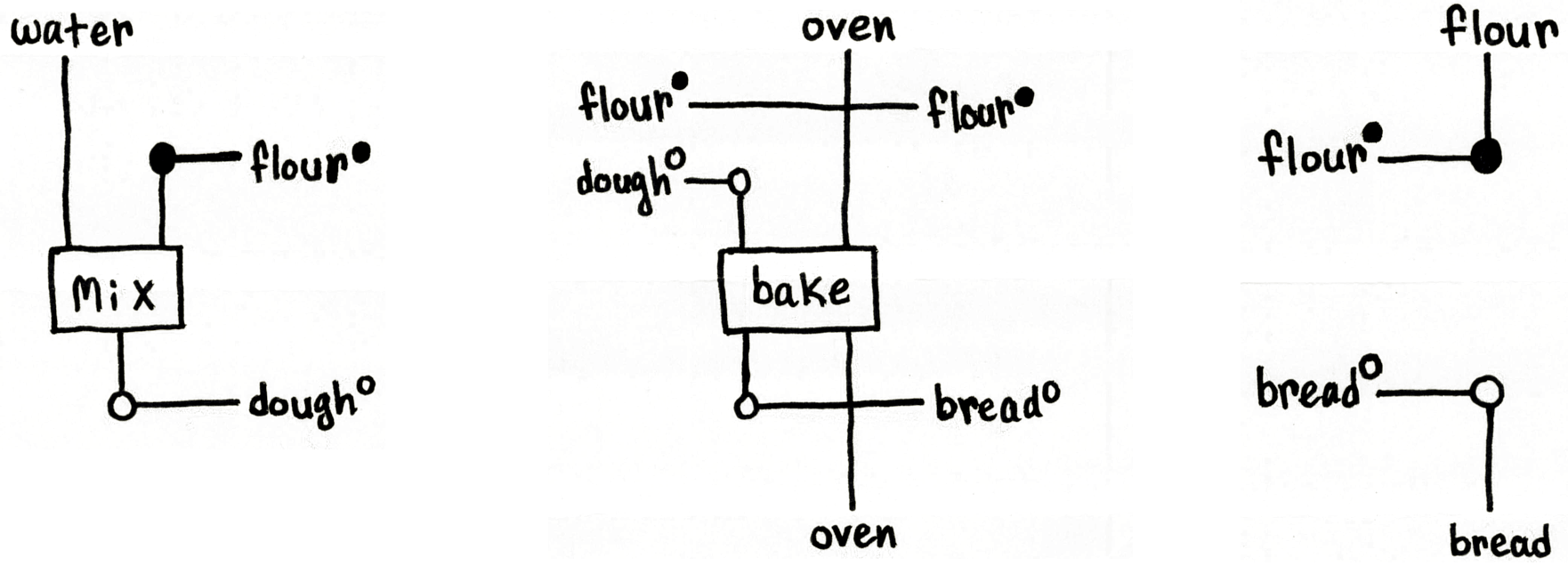}
\]
From left to right, these describe: A procedure for transforming \texttt{water} into nothing by \texttt{mix}ing it with \texttt{flour} obtained by exchange along the right boundary, then sending the resulting \texttt{dough} away along the right boundary; A procedure for transforming an \texttt{oven} into an \texttt{oven}, receiving \texttt{flour} along the right boundary and sending it out the left boundary, then receiving \texttt{dough} along the left boundary, which is \texttt{bake}d in the \texttt{oven}, with the resulting \texttt{bread} sent out along the right boundary; Finally, a procedure for turning \texttt{flour} into \texttt{bread} by giving it away and then receiving \texttt{bread} along the left boundary.
When we compose these concurrent transformations horizontally in the evident way, they give a transformation of resources in the usual sense, i.e.,  a morphism of $\A \cong \bv\,\corner{\A}$:
\[
\includegraphics[height=2.7cm]{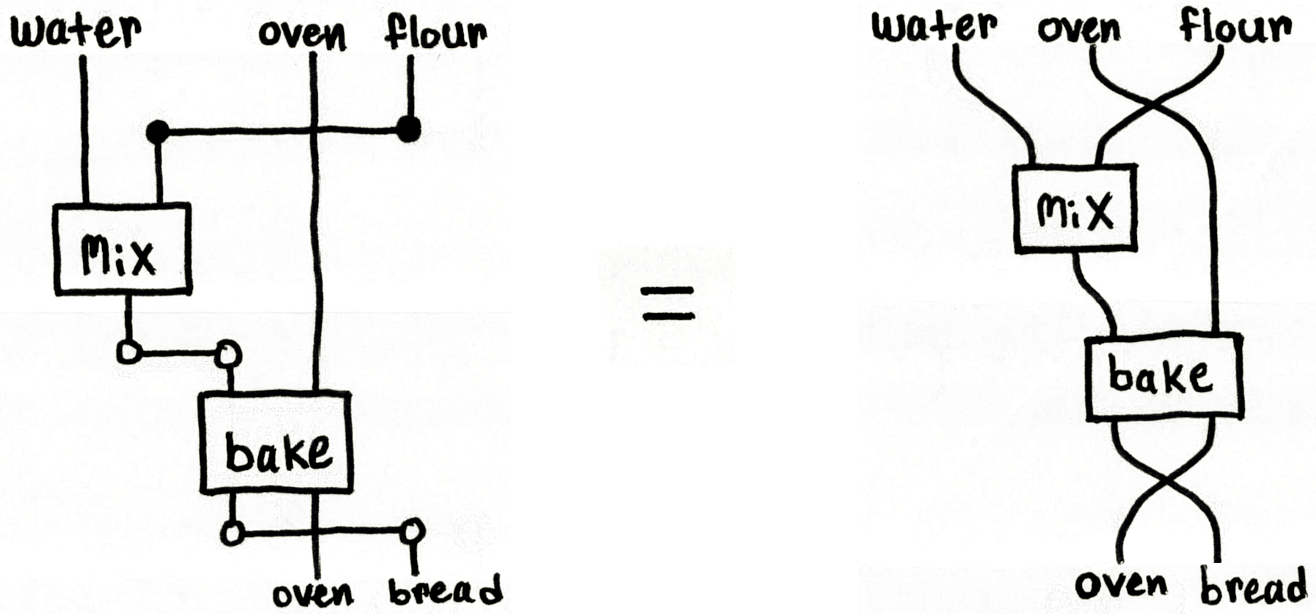}
\]

We understand equality of cells in $\corner{\A}$ much as we understand equality of morphisms in a resource theory: two cells should be equal in case the transformations they describe would have the same effect on the resources involved. In this way, cells of $\corner{\A}$ allow us to break a transformation into many concurrent parts. Note that with the crossing cells, it is possible for cells that are not immediately adjacent to exchange resource across the cells in between them. In the above example, $\texttt{flour}$ is sent from the rightmost cell to the leftmost cell across the middle cell. This makes the double-categorical structure less constraining that it may seem at first. For example we might rearrange our previous example into the following horizontally composable cells of $\corner{\mathfrak{B}}$:
\begin{mathpar}
  \includegraphics[height=1.7cm,align=c]{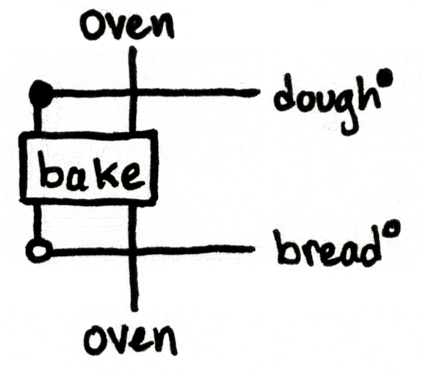}

  \includegraphics[height=1.7cm,align=c]{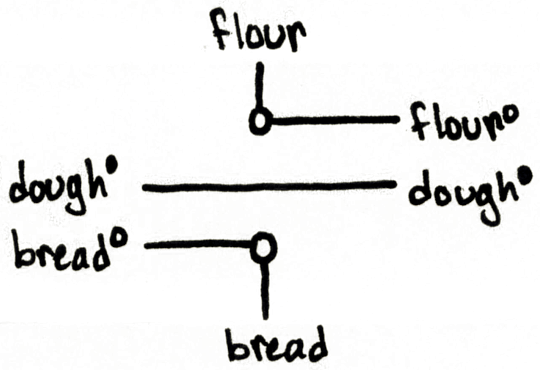}

  \includegraphics[height=1.7cm,align=c]{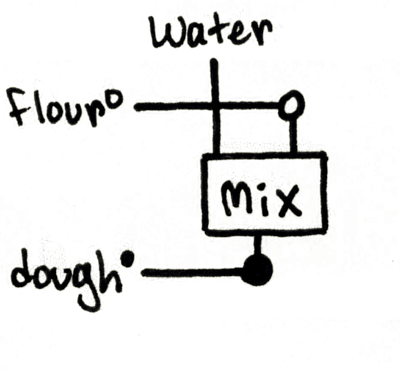}
\end{mathpar}
When composed, we obtain a similar morphism of $\A$:
\[
\includegraphics[height=2.8cm,align=c]{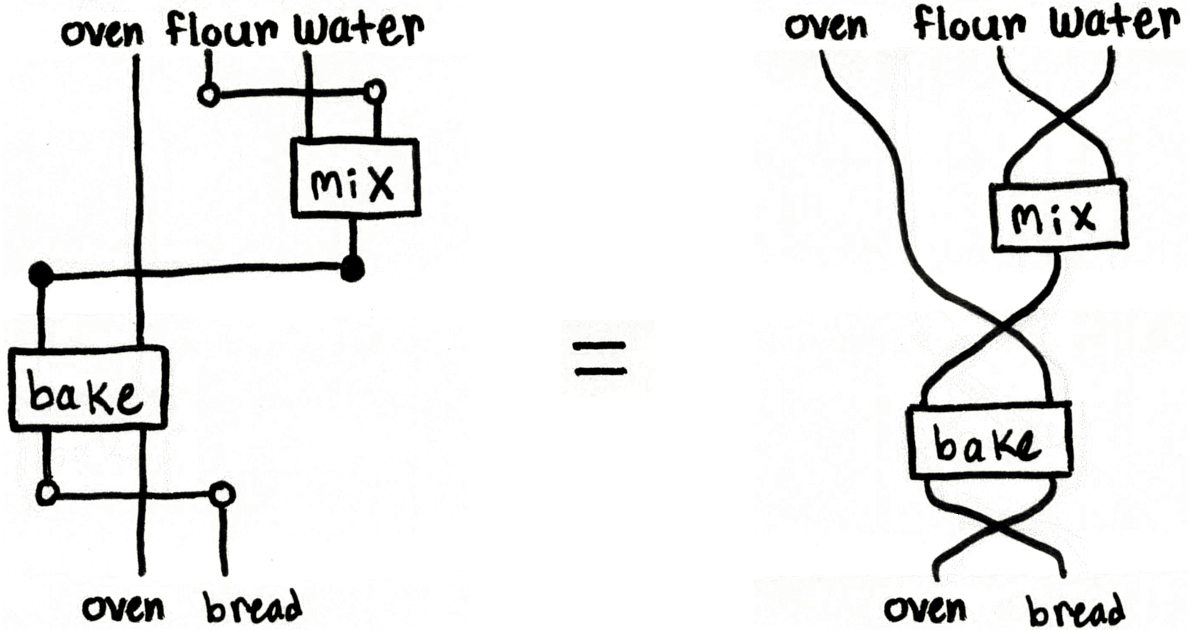}
\]
It is worth mentioning that the difference between $\texttt{oven} \otimes \texttt{flour} \otimes \texttt{water}$ and $\texttt{water} \otimes \texttt{oven} \otimes \texttt{flour}$ is negligible since any permutation of a collection of resources is naturally isomorphic to the original collection as an object of $\A$.

\section{Horizontal Cells as Resource Transducers}\label{sec:transducers}

If $\A$ is a resource theory, then the category $\bh\,\corner{\A}$ of horizontal cells of the free cornering can be understood as a category of ($\A$-valued) \emph{resource transducers}.\footnote{The word ``transducer'' is derived from the latin words \emph{trans} --- meaning ``across'' and \emph{ducere} --- meaning ``lead''. We feel this is a good fit for the horizontal cells of the free cornering, which can be understood as a method of leading resources across the cell in question.} Specifically, recall our interpretation of $\ex{\A} = (\bh\,\corner{\A})_0$ as $\A$-valued exchanges, in which two parties $\texttt{Alice}$ and $\texttt{Bob}$ must supply or retreive the resources involved in the exchange in the order specified, with who gives whom what determined by the polarity of the resources (see Section~\ref{sec:interpretation}). Let $h : X \to Y$ be an arrow of $\bh\,\corner{\A}$. We can understand $h$ as a machine operated by a left and right participant, again called \texttt{Alice} and \texttt{Bob} respectively. To operate the machine, \texttt{Alice} must play the left hand role of the domain exchange $X$ and \texttt{Bob} must play the right hand role of the codomain exchange $Y$. The morphism $h$ describes the internals of the machine. For example, consider the following morphism of $\bh\,\corner{\A}$:
\[
\alice \rightsquigarrow
\includegraphics[height=1.5cm,align=c]{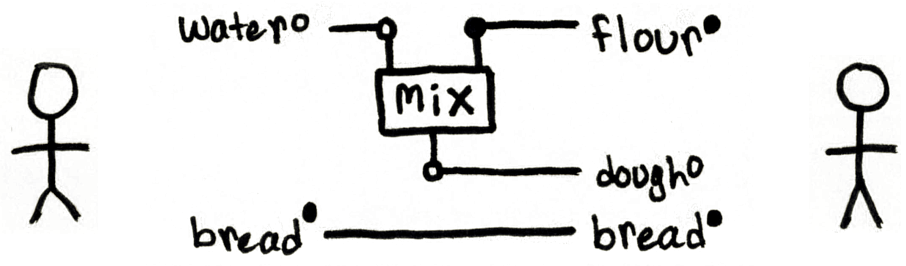}
\leftsquigarrow \bob
\]
To operate the transducer, \texttt{Alice} must supply water and then receive bread, while \texttt{Bob} must supply flour, receive dough, and then supply bread. The effect of the machine is to mix the flour and water initially supplied into the dough \texttt{Bob} receives, and then to send the bread \texttt{Bob} supplies to \texttt{Alice}.

The transducer interpretation (along with our previous interpretation of the whole of $\corner{\A}$) makes $\bh\,\corner{\A}$ into a category of independent interest, and in this section we will study it. Compounding our interest is the fact that $\bh\,\corner{\A}$ is rather unusual. It is of course a monoidal category (see Section~\ref{sec:singledouble}) but fails to have any of the properties common to monoidal categories. Selinger's survey paper~\cite{Sel10} lists many such properties, for example:
\begin{defiC}[\cite{Sel10}]
  A monoidal category is \emph{spatial} in case for all objects $X$ and arrows $h : I \to I$ we have:
  \[
  \includegraphics[height=1.2cm]{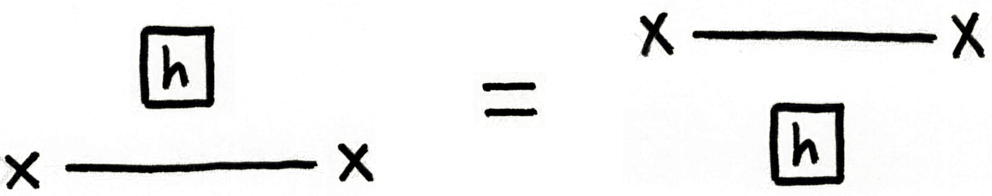}
  \]
\end{defiC}
\noindent It is easy to see that $\bh\,\corner{\A}$ has the property of being spatial:
\begin{lem}\label{lem:spatial}
  $\bh\,\corner{\A}$ is spatial.
\end{lem}
\begin{proof}
  We use the fact that every symmetric monoidal category is spatial. The proof is by induction on the type $X$ of the wire. If $X$ is $A^\circ$ we have:
  \[
  \includegraphics[height=1.2cm]{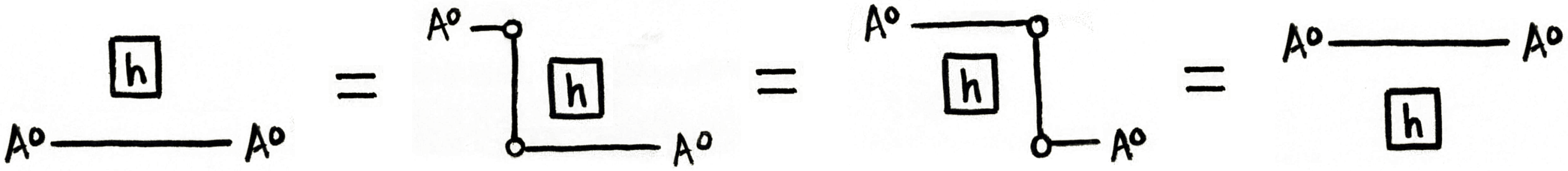}
  \]
  and so the spatial axiom holds. Similarly the spatial axiom holds if $X$ is $A^\bullet$. If $X$ is $I$ the spatial axiom holds trivially, and the inductive case is immediate.
\end{proof}
\noindent We note that $\bh\,\corner{\A}$ has \emph{no other property} found in the aforementioned survey paper.

Much of the structure that $\bh\,\corner{\A}$ does have consists of isomorphisms formed of corner cells. While isomorphic objects in $\bv\,\corner{\A} \cong \A$ can be thought of as equivalent collections of resources --- being freely transformable into each other --- we understand isomorphic objects in $\bh\,\corner{\A}$ as \emph{equivalent exchanges}. For example, there are many ways for $\alice$ to give $\bob$ an $A$ and a $B$: Simultaneously, as $A \otimes B$; one after the other, as $A$ and then $B$; or in the other order, as $B$ and then $A$. While these are different sequences of events, they achieve the same thing, and are thus equivalent. Similarly, for $\alice$ to give $\bob$ an instance of $I$ is equivalent to nobody doing anything. Formally, we have:
\begin{lem}\label{lem:moraliso}
  In $\bh\,\corner{\A}$ we have for any $A,B$ of $\A$:
  \begin{enumerate}
  \item $I^\circ \cong I \cong I^\bullet$.
  \item $A^\circ \otimes B^\circ \cong B^\circ \otimes A^\circ$ and $A^\bullet \otimes B^\bullet \cong B^\bullet \otimes A^\bullet$.
  \item $(A \otimes B)^\circ \cong A^\circ \otimes B^\circ$ and $(A \otimes B)^\bullet \cong A^\bullet \otimes B^\bullet$
  \end{enumerate}
\end{lem}
\begin{proof}
  \begin{enumerate}
  \item For $I \cong I^\circ$, consider the $\circ$-corners corresponding to $I$:
    \[
    \includegraphics[height=1.7cm]{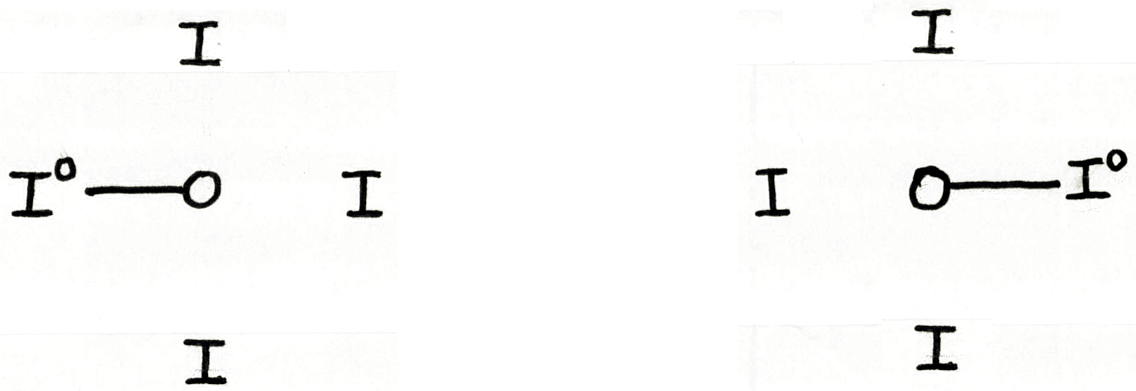}
    \]
    we know that these satisfy the yanking equations:
    \[
    \includegraphics[height=1cm]{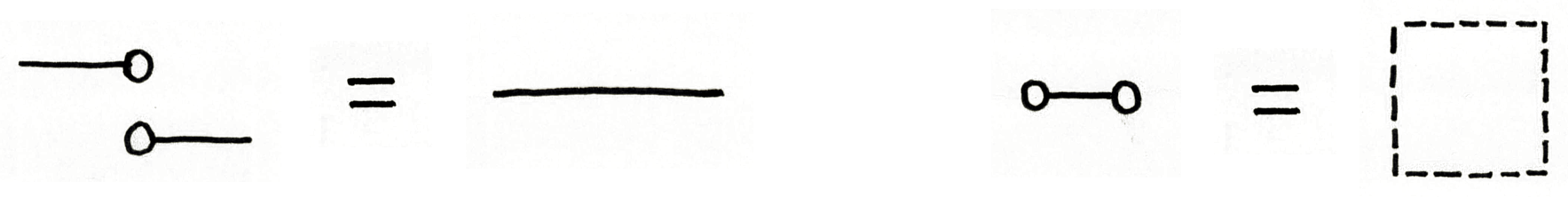}
    \]
    which exhibits an isomorphism $I \cong I^\circ$. Similarly, $I \cong I^\bullet$. Thus, we see formally that exchanging nothing is the same as doing nothing.
  \item The $\circ$-corner case is the interesting one: Define the components of our isomorphism to be:
    \[
    \includegraphics[height=0.8cm,align=c]{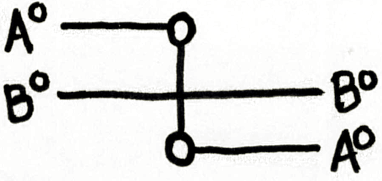}
    \hspace{0.5cm}
    \text{and}
    \hspace{0.5cm}
    \includegraphics[height=0.8cm,align=c]{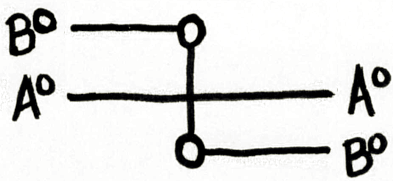}
    \]
    then for both of the required composites we have:
    \[
    \includegraphics[height=2cm]{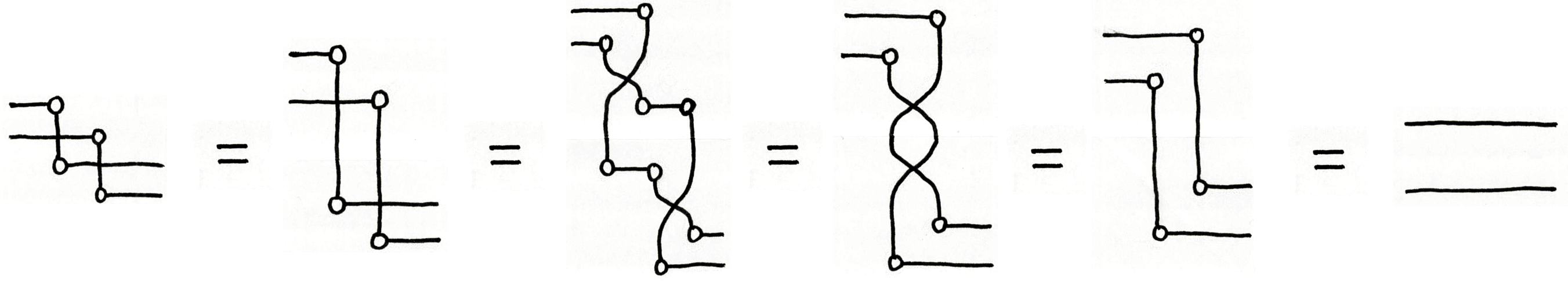}
    \]
    and so $A^\circ \otimes B^\circ \cong B^\circ \otimes A^\circ$. Similarly $A^\bullet \otimes B^\bullet \cong B^\bullet \otimes A^\bullet$. This captures formally the fact that if $\alice$ is going to give $\bob$ an $A$ and a $B$, it doesn't really matter which order she does it in.
  \item Here it is convenient to switch between depicting a single wire of sort $A \otimes B$ and two wires of sort $A$ and $B$ respectively in our string diagrams. To this end, we allow ourselves to depict the identity on $A \otimes B$ in multiple ways, using the notation of~\cite{Coc17}:
    \[
    \includegraphics[height=1.7cm]{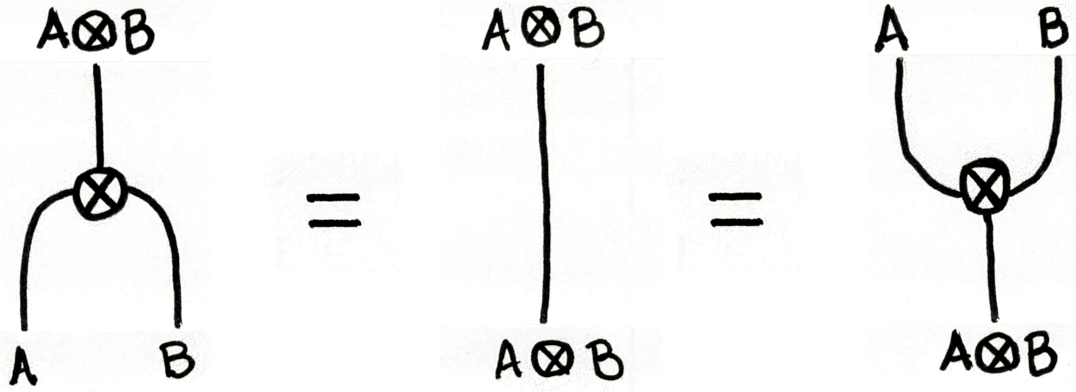}
    \]
    Then the components of our isomorphism $(A \otimes B)^\circ \cong A^\circ \otimes B^\circ$ are:
    \[
    \includegraphics[height=1.2cm,align=c]{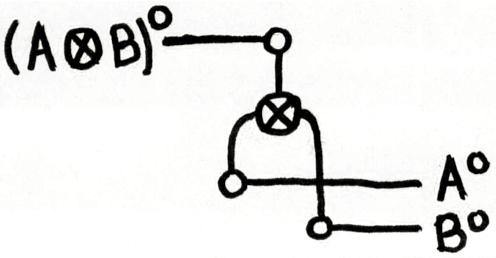}
    \hspace{0.5cm}
    \text{and}
    \hspace{0.5cm}
    \includegraphics[height=1.2cm,align=c]{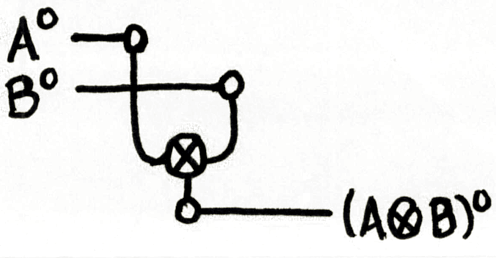}
    \]
    and, much as in (ii), it is easy to see that the two possible composites are both identity maps. Similarly, $(A \otimes B)^\bullet \cong (A^\bullet \otimes B^\bullet)$. This captures formally the fact that giving away a collection is the same thing as giving away its components.
    \qedhere
  \end{enumerate}
\end{proof}

\noindent
For example, we should be able to compose the cells on the left and right below horizontally, since their right and left boundaries, respectively, indicate equivalent exchanges:
\[
\includegraphics[height=1.5cm,align=c]{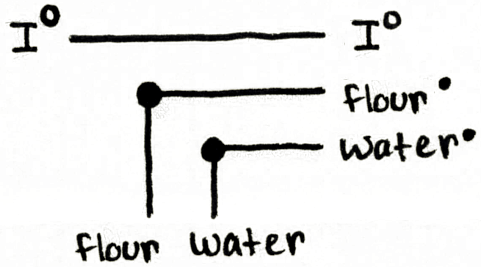}
\hspace{1.5cm}
\includegraphics[height=1cm,align=c]{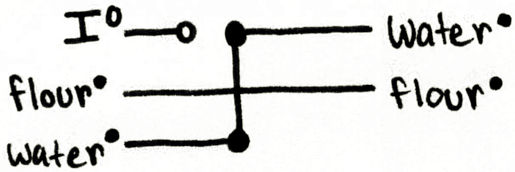}
\hspace{1.5cm}
\includegraphics[height=1.3cm,align=c]{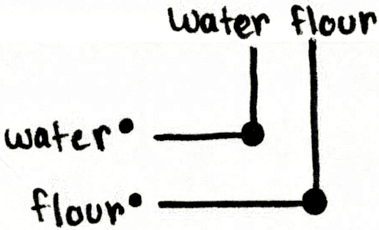}
\]
Our lemma tells us that in cases like this there will be a mediating isomorphism, as above in the middle, making composition possible.

It is worth noting that we \emph{do not} have $A^\circ \otimes B^\bullet \cong B^\bullet \otimes A^\circ$:
\begin{observation}\label{obs:halfsym}
  There is a morphism $d^\circ_\bullet : A^\circ \otimes B^\bullet \to B^\bullet \otimes A^\circ$ in one direction, defined by
\[
\includegraphics[height=1cm]{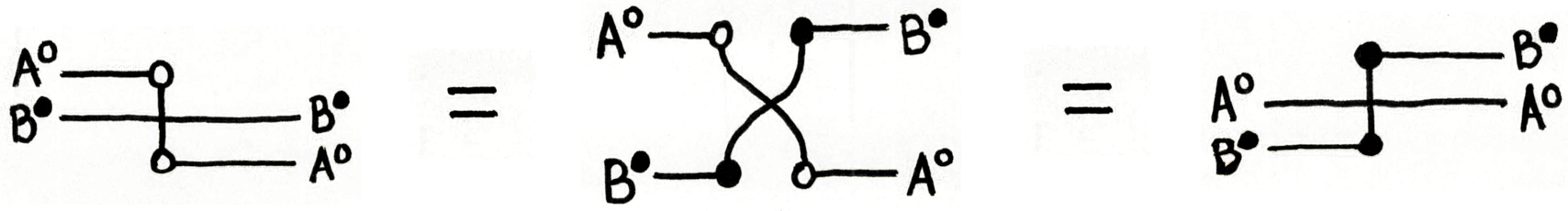}
\]
but there need not be a morphism in the other direction, and this is not in general invertible. In particular, $\bh\,\corner{\A}$ is monoidal, but need not be symmetric.
\end{observation}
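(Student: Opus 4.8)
The plan is to treat the two assertions separately: the existence of $d^\circ_\bullet$ is a construction, whereas the failure of symmetry requires a non-existence (emptiness) argument. Monoidality of $\bh\,\corner{\A}$ is not something to prove here, since it was already recorded in Section~\ref{sec:singledouble}, where $\bh\D$ is a strict monoidal category for every single object double category $\D$ (objects $\D_V$, composition horizontal, tensor vertical). So the real content is the asymmetry.

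For the forward map I would simply exhibit $d^\circ_\bullet$ as the displayed cell and check that it is well typed. Concretely, bend the incoming $A^\circ$ on the upper left into a rightward horizontal $A$-wire with a $\circ$-corner, route it across the vertical $B^\bullet$-wire using the appropriate crossing cell, and un-bend it with a matching $\circ$-corner so that it leaves as $A^\circ$ on the lower right, with the $B^\bullet$-strand routed dually. Reading the boundaries off the diagram gives left boundary $A^\circ \otimes B^\bullet$ and right boundary $B^\bullet \otimes A^\circ$ with trivial top and bottom, so this is a morphism of $\bh\,\corner{\A}$ as claimed, and well-definedness is immediate from the inductive definition of the crossing cells together with the yanking equations. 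The point to emphasise is that the crossing cells are defined with a fixed chirality: a horizontal wire is only ever routed across a vertical one in one sense, so this construction gives no reason to expect a cell in the opposite direction.

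To see that symmetry genuinely fails, I would first reduce the claim to a clean non-existence statement. If $\bh\,\corner{\A}$ were symmetric monoidal there would be a symmetry isomorphism $A^\circ \otimes B^\bullet \xrightarrow{\sim} B^\bullet \otimes A^\circ$, and in particular its inverse would be a morphism $B^\bullet \otimes A^\circ \to A^\circ \otimes B^\bullet$. Hence it suffices to produce a single resource theory $\A$ for which $\mathrm{Hom}_{\bh\,\corner{\A}}(B^\bullet \otimes A^\circ,\, A^\circ \otimes B^\bullet) = \varnothing$; the same computation then shows $d^\circ_\bullet$ is not invertible. The natural candidate is the free symmetric strict monoidal category on two generating objects $A,B$ with no generating arrows, so that $\A$ creates and destroys nothing and its only morphisms are the symmetries permuting equal multisets of generators. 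Resource conservation then forces any horizontal cell to transport each token of $A$ and each token of $B$ from one boundary to the other without creating or absorbing it, and I would track not only which token goes where but also the relative order in which the transfers occur along each boundary.

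The main obstacle is turning this intuition into an invariant that is provably preserved by every way of building a horizontal cell: corners, crossings, the $\corner{f}$ generators, and both horizontal and vertical composition. My plan is to equip each exchange with a precedence order recording that higher wires happen before lower ones, to interpret a horizontal cell as a scheduling that matches each transfer on one boundary with the complementary transfer on the other, and to show by structural induction (using the crossing lemma and its corollary to handle the interaction of crossings with other cells) that such a scheduling can never invert the order of two transfers of opposite polarity once the protocol has forced them apart. Against $B^\bullet \otimes A^\circ \to A^\circ \otimes B^\bullet$ this produces a cyclic dependency, since the $A$ would have to be emitted before the $B$ it depends on has arrived, so no cell exists; the dual order required by $d^\circ_\bullet$ is acyclic and hence realisable. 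I expect the delicate point to be verifying the invariant across the crossing cells, as that is precisely where the chirality enters; the softer argument via the resource interpretation of Section~\ref{sec:interpretation} already makes the asymmetry evident, and the invariant is what upgrades it to a proof.
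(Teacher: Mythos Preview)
The paper gives no proof: Observation~\ref{obs:halfsym} simply displays the diagram for $d^\circ_\bullet$ and asserts the rest, with only the informal remark in the paragraph following it (about sending before receiving) serving as justification. Your construction of the forward map agrees with the paper's. Your non-existence argument, however, is entirely your own addition, and the plan is sound: the free symmetric monoidal category on two generators with no arrows is the natural witness, and the causal/precedence obstruction you describe is exactly what the paper leaves implicit. Two small points on execution. Cells of $\corner{\A}$ are equivalence classes modulo yanking, the $\corner{f}$-equations, and the double-category laws, so an invariant defined by induction on representatives must be shown to respect those relations; you do not mention this step. And the crossing cells are composites of the generators, so in an induction over the generating data they need no separate clause---invoking the crossing lemma there is unnecessary. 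A cleaner alternative would be to exhibit a double functor out of $\corner{\A}$ into some concrete target in which the relevant hom-set is visibly empty, letting freeness do the work; but your syntactic approach will also succeed with care. Either way you are supplying a proof where the paper offers only an observation.
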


This observation reflects formally the intuition that if I receive some resources before I am required to send any, then I can send some of the resources that I receive. However, if I must send the resources first, this is not the case. In this way, $\bh\,\corner{\A}$ contains a sort of causal structure.

Next, we find that $\bh\,\corner{\A}$  contains the original resource theory $\A$ as a subcategory in two different ways, one for each polarity:
\begin{lem}\label{lem:strongmonoidal}
  There are strong monoidal functors $(-)^\circ : \A \to \bh\,\corner{\A}$ and $(-)^\bullet : \A^\op \to \bh\,\corner{\A}$ defined respectively on $f : A \to B$ of $\A$ by:
\begin{mathpar}
    \includegraphics[height=1.2cm,align=c]{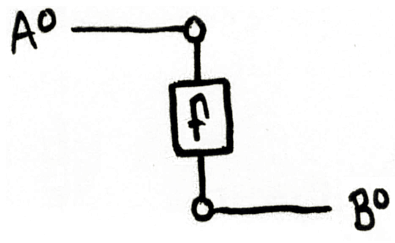}

    \text{and}

    \includegraphics[height=1.2cm,align=c]{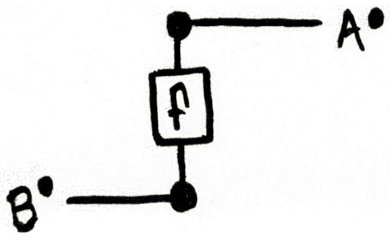}
\end{mathpar}
  Further, each of these functors is full and faithful.
\end{lem}
\begin{proof}
  $(-)^\circ$ is functorial as in:
  \[
  \includegraphics[height=1.5cm,align=c]{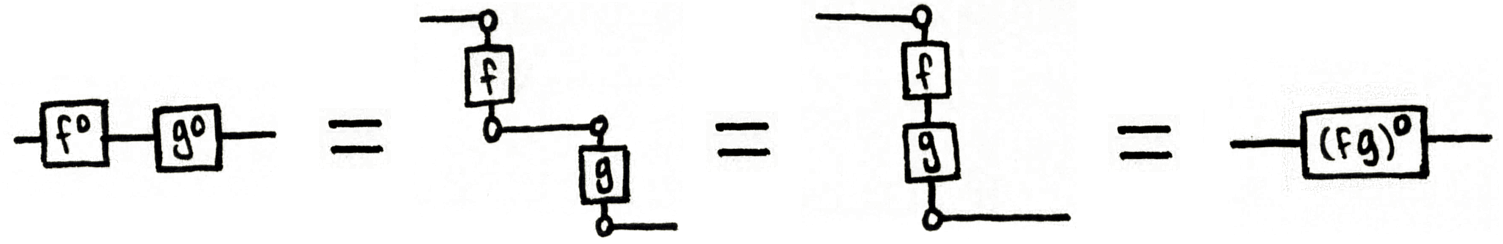}
  \]
  It interacts with the tensor product in $\A$ as in:
  \[
  \includegraphics[height=1.5cm,align=c]{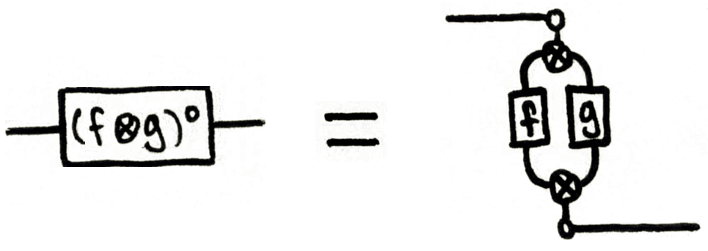}
  \]
  and is therefore strong monoidal as a consequence of Lemma~\ref{lem:moraliso}. Further $(-)^\circ$ is faithful because $\corner{\A}$ is freely generated. It is full because of the coherence theorem of~\cite{Mye16}, which implies that for any horizontal cell (morphism of $\bh\,\corner{\A}$) $h : A^\circ \to B^\circ$ we may yank all of the wires straight to obtain an equal morphism $f^\circ = h$ for some $f : A \to B$ of $\A$. Similarly, $(-)^\bullet$ is functorial, strong monoidal, full, and faithful.
\end{proof}

There is also a contravariant involution $(-)^* : \bh\,\corner{\A}^\op \to \bh\,\corner{\A}$. As an intermediate step we define an operation on the cells of $\corner{\A}$ as follows: For $A \in \A_0 = \bv\,\corner{\A}$ let $A^* = A$. For $X \in \ex{\A} = \bh\,\corner{\A}$ define $X^*$ inductively: $I^* = I$, $(A^\circ)^* = A^\bullet$, $(A^\bullet)^* = A^\circ$, and $(X \otimes Y)^* = X^* \otimes Y^*$. On cells of $\corner{\A}$ we also define $(-)^*$ inductively: The base cases are $\corner{f}^* = \corner{f}$ along with:
\begin{mathpar}
  {
    \includegraphics[height=1.7cm,align=c]{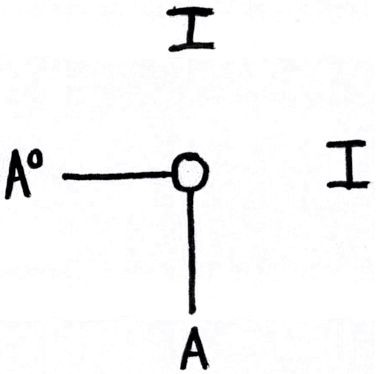}
    \hspace{0.5cm}\stackrel{*}{\mapsto}\hspace{0.5cm}
    \includegraphics[height=1.7cm,align=c]{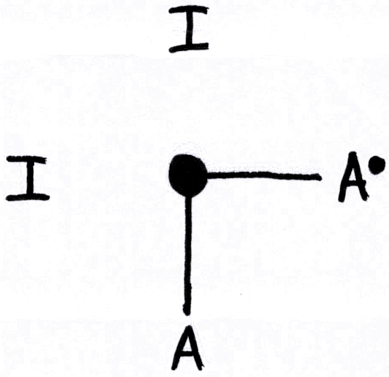}
  }

  {
    \includegraphics[height=1.7cm,align=c]{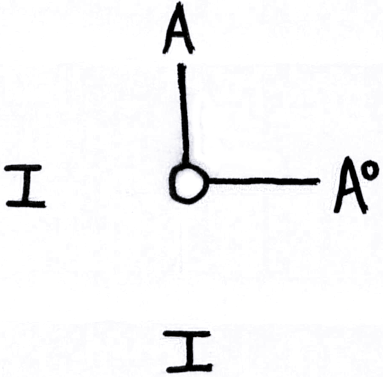}
    \hspace{0.5cm}\stackrel{*}{\mapsto}\hspace{0.5cm}
    \includegraphics[height=1.7cm,align=c]{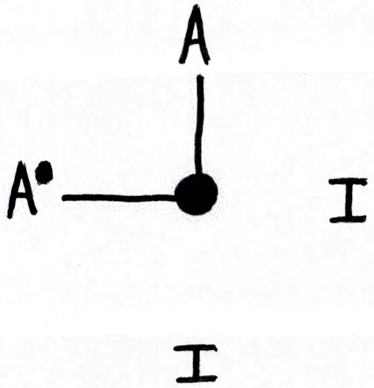}
  }

  {
    \includegraphics[height=1.7cm,align=c]{figs/involution-rhs-one.png}
    \hspace{0.5cm}\stackrel{*}{\mapsto}\hspace{0.5cm}
    \includegraphics[height=1.7cm,align=c]{figs/involution-lhs-one.png}
  }

  {
    \includegraphics[height=1.7cm,align=c]{figs/involution-lhs-two.png}
    \hspace{0.5cm}\stackrel{*}{\mapsto}\hspace{0.5cm}
    \includegraphics[height=1.7cm,align=c]{figs/involution-rhs-two.png}
  }
\end{mathpar}
and the inductive cases are:
\begin{mathpar}
  {
    \includegraphics[height=1.7cm,align=c]{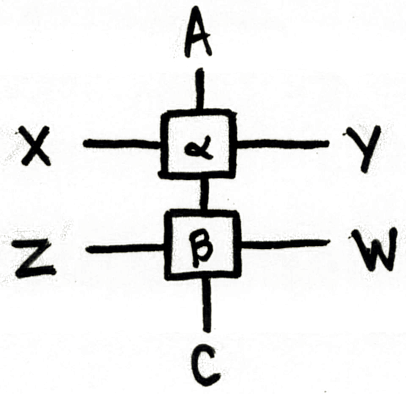}
    \hspace{0.5cm}\stackrel{*}{\mapsto}\hspace{0.5cm}
    \includegraphics[height=1.7cm,align=c]{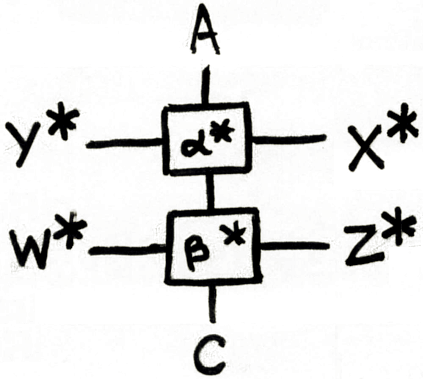}
  }

  {
    \includegraphics[height=1.7cm,align=c]{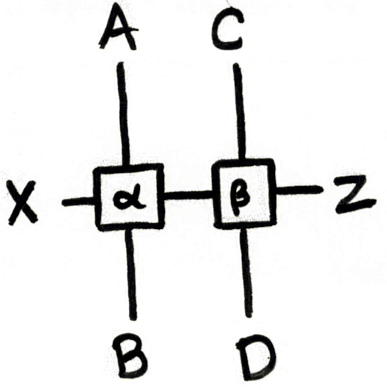}
    \hspace{0.5cm}\stackrel{*}{\mapsto}\hspace{0.5cm}
    \includegraphics[height=1.7cm,align=c]{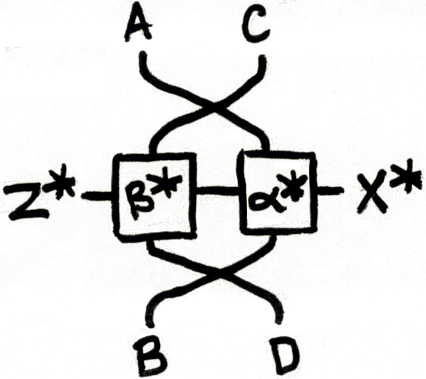}
  }
\end{mathpar}
Informally, $\alpha^*$ is the mirror image of $\alpha$. It is easy to see that we have $\alpha^{**} = \alpha$ for any cell $\alpha$ of $\corner{\A}$. Thus, restricting $(-)^*$ to $\bh\,\corner{\A}$ gives:
\begin{lem}\label{lem:involution}
  There is a contravariant involution $(-)^* : \bh\,\corner{\A}^\op \to \bh\,\corner{\A}$ with the property that $(f \otimes g)^* = f^* \otimes g^*$.\footnote{It is tempting to call this a \emph{contravariant monoidal involution}, but in the covariant case a \emph{monoidal involution} $(-)^\iota$ has the property that $(f \otimes g)^\iota = g^\iota \otimes f^\iota$, twisting the tensor product~\cite{Egg11}. We refrain from coining any new technical terms lest a ``contravariant monoidal involution'' turn out to be better suited to describing contravariant involutions that twist the tensor product instead of those that do not.}
\end{lem}

We discuss one final bit of structure in $\bh\,\corner{\A}$, concerning the following arrows:
\begin{mathpar}
\includegraphics[height=1cm,align=c]{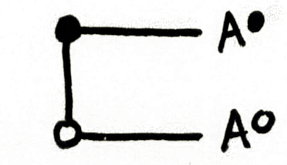}

\includegraphics[height=1cm,align=c]{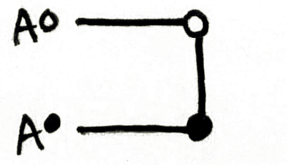}
\end{mathpar}
\noindent These are reminiscent of the string diagrams for rigid monoidal categories, these arrows make $A^\circ$ into the left dual of $A^\bullet$ (and so make $A^\bullet$ into the right dual of $A^\circ$). However, $\bh\,\corner{\A}$ is neither left nor right rigid: for example $A^\circ \otimes B^\bullet$ has neither a left nor right dual.
It is natural to ask whether the arrows introduced above carry significant categorical structure. We give one answer, and in doing so connect the present work to Cockett and Pastro's logic of message passing~\cite{Coc09}. In particular, the categorical semantics of this logic of message passing is given by \emph{linear actegories}. If $\A$ is a symmetric monoidal category, a linear $\A$-actegory is given by a linearly distributive category $\X$ (see e.g.,~\cite{Coc17}) together with two functors:
\begin{mathpar}
\circ : \A \times \X \to \X

\bullet : \A^\op \times \X \to \X
\end{mathpar}
such that $\circ$ is the paramaterised left adjoint of $\bullet$ --- that is, for all $A \in \A_0$ we have $A \circ - \dashv A \bullet -$ --- along with nine natural families of arrows subject to a large number of coherence conditions.

The category $\bh\,\corner{\A}$ exhibits similar, if much simpler, structure. In particular the strong monoidal functors $(-)^\circ$ and $(-)^\bullet$ of Lemma~\ref{lem:strongmonoidal} allow us to define $\circ : \A \times \bh\,\corner{\A} \to \bh\,\corner{\A}$ and $\bullet : \A^\op \times \bh\,\corner{\A} \to \bh\,\corner{\A}$ by $f \circ h = f^\circ \otimes h$ and $f \bullet h = f^\bullet \otimes h$. Echoing the definition of a linear actegory, we have:
\begin{lem}
  $\circ$ is the paramaterised left adjoint of $\bullet$. That is, for all $A \in \A$ the functors $A \circ - : \bh\,\corner{\A} \to \bh\,\corner{\A}$ and $A \bullet - : \bh\,\corner{\A} \to \bh\,\corner{\A}$ defined on $h : X \to Y$ by, respectively:
\begin{mathpar}
  \includegraphics[height=1.5cm,align=c]{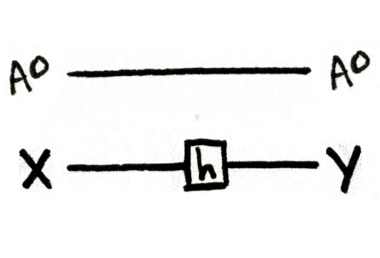}

  \text{and}

  \includegraphics[height=1.5cm,align=c]{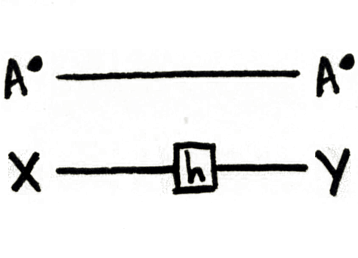}
\end{mathpar}
are such that $A \circ - \dashv A \bullet -$.
\end{lem}
\begin{proof}
  Fix an object $A \in \A$. We require natural families of morphisms $\eta_{A,X} : X \to A \bullet (A \circ X)$ and $\varepsilon_{A,X} : A \circ (A \bullet X) \to X$ in $\bh\,\corner{\A}$ that satisfy the triangle identities. Define $\eta_{A,X}$ and $\varepsilon_{A,X}$, respectively, by
\begin{mathpar}
  \includegraphics[height=1.5cm,align=c]{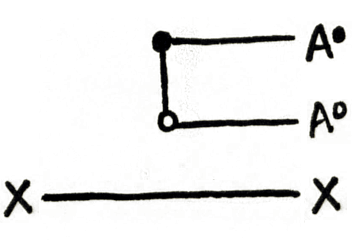}

  \text{and}

  \includegraphics[height=1.5cm,align=c]{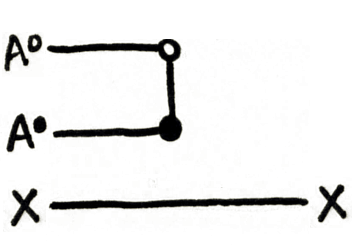}
\end{mathpar}
Now the triangle identities hold by repeated yanking, as in:
\begin{mathpar}
  \includegraphics[height=1.5cm,align=c]{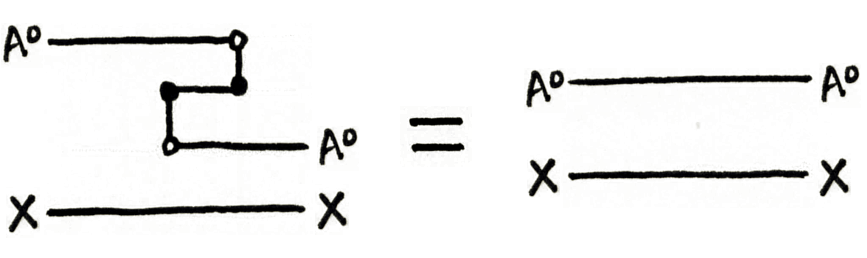}

  \text{and}

  \includegraphics[height=1.5cm,align=c]{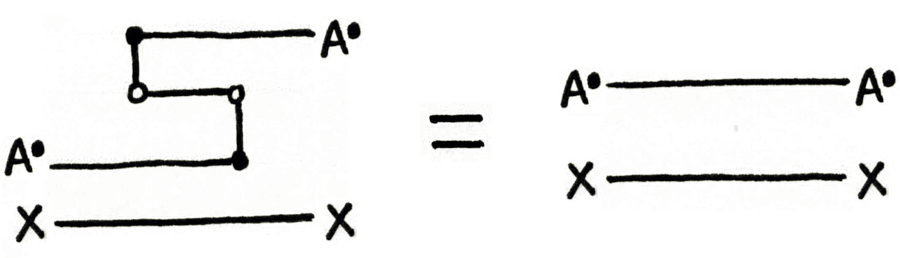}
\end{mathpar}
We therefore conclude that $A \circ - \dashv A \bullet -$, as required.
\end{proof}

Now, every monoidal category is a linearly distributive category (with both monoidal operations given by $\otimes$), and it turns out that $\bh\,\corner{\A}$ forms a (somewhat degenerate) linear actegory. Of the nine natural families of arrows required by the definition, four are accounted for by the isomorphisms of Lemma~\ref{lem:moraliso}, a further four become identities in our setting, and the final one is given by the $d^\circ_\bullet$ morphisms from Observation~\ref{obs:halfsym}. The coherence conditions all hold trivially. We record:
\begin{prop}\label{prop:linearactegory}
  Let $\A$ be a resource theory. Then $\bh\,\corner{\A}$ is a linear actegory.
\end{prop}

This is intriguing insofar as it exhibits a formal connection between the free cornering of a resource theory and existing work on behavioural types. For example, the message-passing interpretation of classical linear logic presented by Wadler in~\cite{Wad14} corresponds to the message-passing interpretation of linear actegories in the special case of a *-autonomous category acting on itself (Example 4.2(4) of~\cite{Coc09}). There may be an even stronger connection to the behavioural type interpretation of intuitionistic linear logic due to Caires and Pfenning~\cite{Caires2010}, although here the connection to the logic of message passing is weaker (Example 4.2(1) of~\cite{Coc09}). We leave the full investigation of these connections for future work.

\section{Axioms for Resource Transducers}\label{sec:axiomscheme}

We have seen that the category of horizontal cells of the free cornering of a resource theory is an interesting object of study in its own right: it is a planar monoidal category that arises naturally and is different from those typically considered. In this section we give a direct presentation of $\bh\,\corner{\A}$ both to deepen our understanding of its structure and to facilitate its use as an example (or counterexample) in the future. While there are many axioms, they are mostly intuitive, and are conveniently organized into pairs by the contravariant involution $(-)^*$ of Lemma~\ref{lem:involution}.

Let $\A$ be a resource theory. Define $\mathsf{T}(\A)$ to be the free spatial strict monoidal category with the generating objects as in:
\begin{mathpar}
  \inferrule{A \in \A_0}{A^\circ \quad \mathsf{obj}}

  \inferrule{A \in \A_0}{A^\bullet \quad \mathsf{obj}}
\end{mathpar}
and the generating morphisms given by:
\begin{mathpar}
  \inferrule*[Right=$\circ$]
             {f : A \to B \in \A_1}
             {f^\circ : A^\circ \to B^\circ}

  \inferrule*[Right=$\bullet$]
             {f : A \to B \in \A_1}
             {f^\bullet : B^\bullet \to A^\bullet}
\\

  \inferrule*[Right=$\triangleleft$]
            {A,B \in \A_0}
            {\triangleleft_{A,B} : (A \otimes B)^\circ \to A^\circ \otimes B^\circ}

  \inferrule*[Right=$\blacktriangleright$]
             {A,B \in \A_0}
             {\blacktriangleright_{A,B}\, : A^\bullet \otimes B^\bullet \to (A \otimes B)^\bullet }
\\
  \inferrule*[Right=$\triangleright$]
            {A,B \in \A_0}
            {\triangleright_{A,B} : A^\circ \otimes B^\circ \to (A \otimes B)^\circ}

  \inferrule*[Right=$\blacktriangleleft$]
             {A,B \in \A_0}
             {\blacktriangleleft_{A,B}\, : (A \otimes B)^\bullet \to A^\bullet \otimes B^\bullet}
            \\
  \inferrule*[Right=$\multimapinv$]
             {\text{}}
             {\multimapinv\, : I \to I^\circ}

  \inferrule*[Right=$\multimap$]
             {\text{}}
             {\multimap\, : I^\circ \to I}

  \inferrule*[Right=$\multimapdot$]
             {\text{}}
             {\multimapdot\, : I^\bullet \to I}

  \inferrule*[Right=$\multimapdotinv$]
             {\text{}}
             {\multimapdotinv\, : I \to I^\bullet}
\\
  \inferrule*[Right=$\sigma\circ$]
             {A,B \in \A_0}
             {\sigma^\circ_{A,B} : A^\circ \otimes B^\circ \to B^\circ \otimes A^\circ}

  \inferrule*[Right=$\sigma\bullet$]
             {A,B \in \A_0}
             {\sigma^\bullet_{A,B} : A^\bullet \otimes B^\bullet \to B^\bullet \otimes A^\bullet}
\\
  \inferrule*[Right=$\eta$]
             {A \in \A_0}
             {\eta_A : I \to A^\bullet \otimes A^\circ}

  \inferrule*[Right=$\varepsilon$]
             {A \in \A_0}
             {\varepsilon_A : A^\circ \otimes A^\bullet \to I}
\end{mathpar}
The rules $\circ$ and $\bullet$ correspond to the image of the functors from Lemma~\ref{lem:strongmonoidal}. All of $\triangleleft,\triangleright,\blacktriangleleft,\blacktriangleright,\multimap,\multimapinv,\multimapdot,\multimapdotinv,\sigma\circ, \sigma\bullet$ correspond to the isomorphisms of Lemma~\ref{lem:moraliso}, and the $\eta$ and $\varepsilon$ rules correspond to the morphisms considered at the end of Section~\ref{sec:transducers} that lead to Proposition~\ref{prop:linearactegory}.

Before presenting the equations for $\mathsf{T}(\A)$ we give the following string-diagrammatic conventions for our generators:
\begin{mathparpagebreakable}
    f^\circ \hspace{0.2cm}\leftrightsquigarrow\hspace{0.2cm} \includegraphics[height=1cm,align=c]{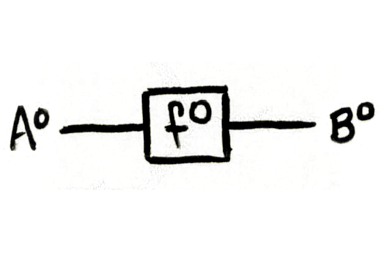}

  f^\bullet \hspace{0.2cm}\leftrightsquigarrow\hspace{0.2cm} \includegraphics[height=1cm,align=c]{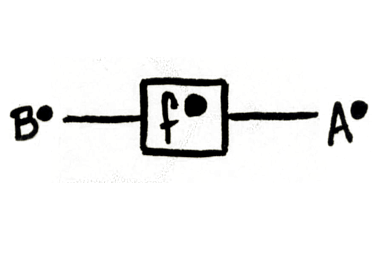}
\\
  \triangleleft_{A,B} \hspace{0.2cm}\leftrightsquigarrow\hspace{0.2cm} \includegraphics[height=1cm,align=c]{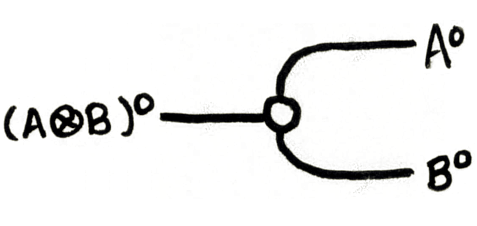}

  \blacktriangleright_{A,B}\, \hspace{0.2cm}\leftrightsquigarrow\hspace{0.2cm} \includegraphics[height=1cm,align=c]{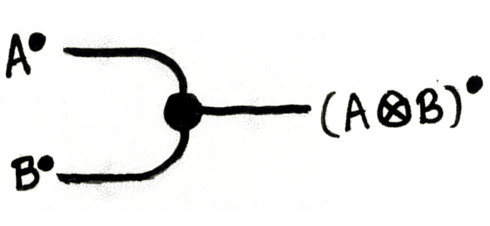}
\\
  \triangleright_{A,B} \hspace{0.2cm}\leftrightsquigarrow\hspace{0.2cm} \includegraphics[height=1cm,align=c]{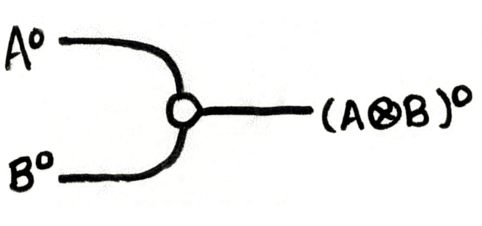}

  \blacktriangleleft_{A,B}\, \hspace{0.2cm}\leftrightsquigarrow\hspace{0.2cm} \includegraphics[height=1cm,align=c]{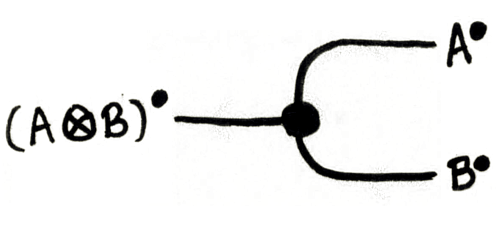}
\\
  \multimapinv\, \hspace{0.2cm}\leftrightsquigarrow\hspace{0.2cm} \includegraphics[height=1cm,align=c]{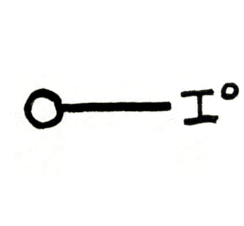}

  \multimap\, \hspace{0.2cm}\leftrightsquigarrow\hspace{0.2cm} \includegraphics[height=1cm,align=c]{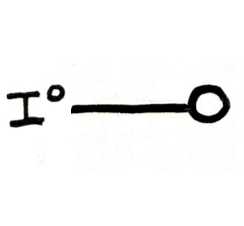}

  \multimapdot\, \hspace{0.2cm}\leftrightsquigarrow\hspace{0.2cm} \includegraphics[height=1cm,align=c]{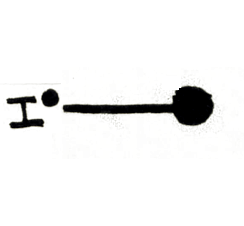}

  \multimapdotinv\, \hspace{0.2cm}\leftrightsquigarrow\hspace{0.2cm} \includegraphics[height=1cm,align=c]{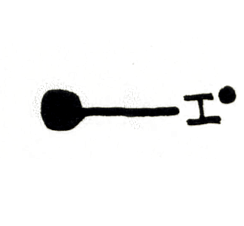}
\\
  \sigma_{A,B}^\circ \hspace{0.2cm}\leftrightsquigarrow\hspace{0.2cm} \includegraphics[height=1cm,align=c]{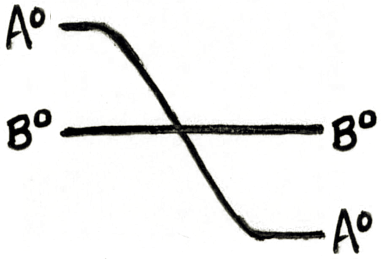}

  \sigma_{A,B}^\bullet \hspace{0.2cm}\leftrightsquigarrow\hspace{0.2cm} \includegraphics[height=1cm,align=c]{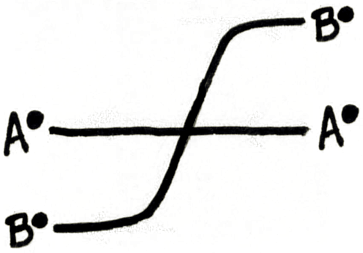}
\\

  \eta_A \hspace{0.2cm}\leftrightsquigarrow\hspace{0.2cm} \includegraphics[height=1cm,align=c]{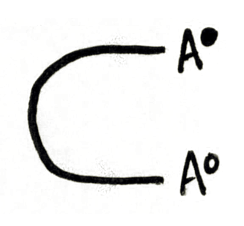}

  \varepsilon_A \hspace{0.2cm}\leftrightsquigarrow\hspace{0.2cm} \includegraphics[height=1cm,align=c]{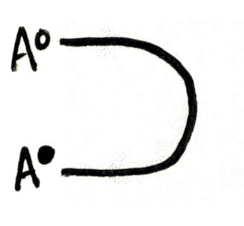}
\end{mathparpagebreakable}
While it is initially difficult to keep track of the polarity of each wire, particularly in the diagrams for the $\sigma\circ$, $\sigma\bullet$, $\eta$, and $\varepsilon$ morphisms, this is alleviated by the fact that resources may flow down but not up. Keeping this in mind allows us to omit any sort of directional information from the wires of our diagrams, which we feel makes them more readable.

Now, we impose the following equations in addition to those of a spatial strict monoidal category, and those inherited from $\A$. First, concerning the interaction of the $\triangleleft,\triangleright$ and $\blacktriangleleft,\blacktriangleright$ morphisms we require:
\begin{mathpar}
  \includegraphics[height=1cm,align=c]{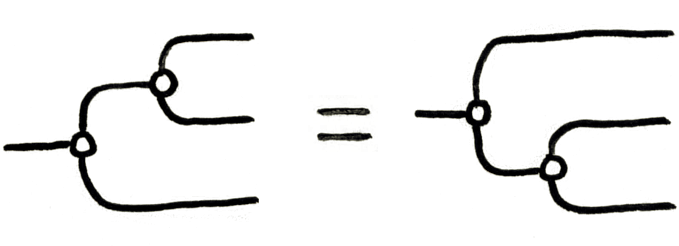}

  \includegraphics[height=1cm,align=c]{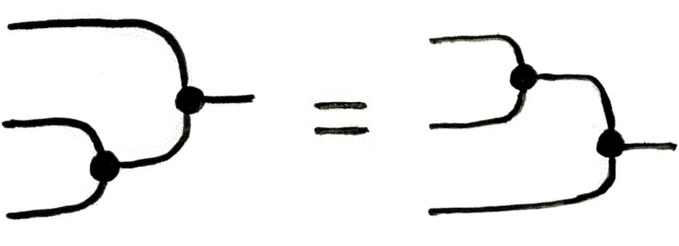}

  \includegraphics[height=1cm,align=c]{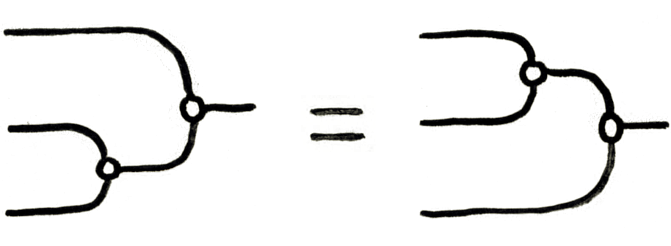}

  \includegraphics[height=1cm,align=c]{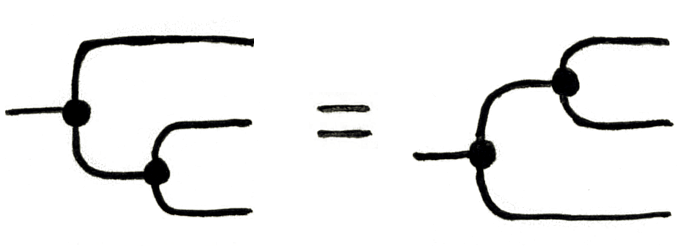}

  \includegraphics[height=1cm,align=c]{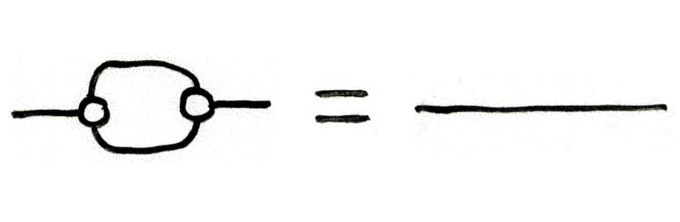}

  \includegraphics[height=1cm,align=c]{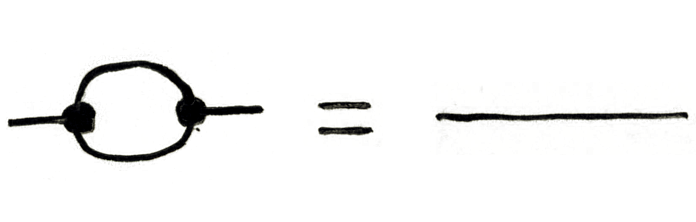}

  \includegraphics[height=1cm,align=c]{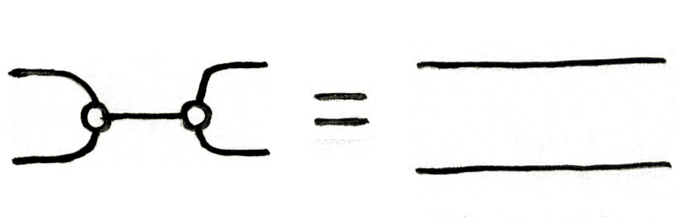}

  \includegraphics[height=1cm,align=c]{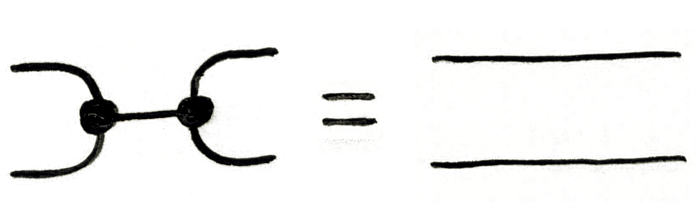}
\end{mathpar}
We note that this is a polarized version of the axioms for ``dividers'' and ``gatherers'' found in the SZX calculus~\cite{Car2019}. We continue with axioms concerning the interaction of the $\multimap,\multimapinv$ and $\multimapdot,\multimapdotinv$ morphisms:
\begin{mathpar}
  \includegraphics[height=1cm,align=c]{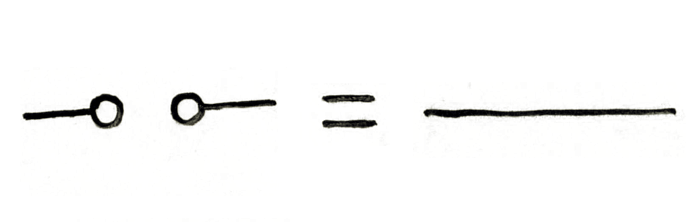}

  \includegraphics[height=1cm,align=c]{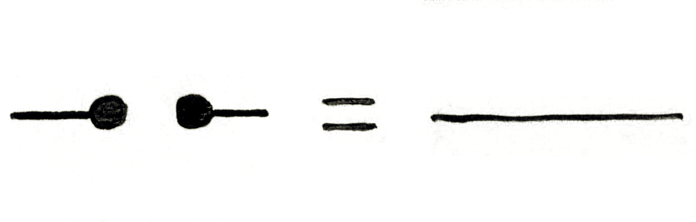}

  \includegraphics[height=1cm,align=c]{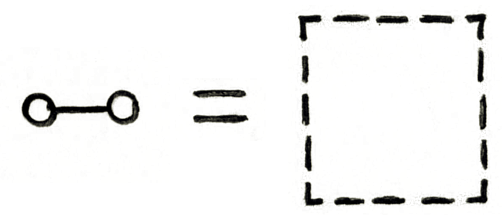}

  \includegraphics[height=1cm,align=c]{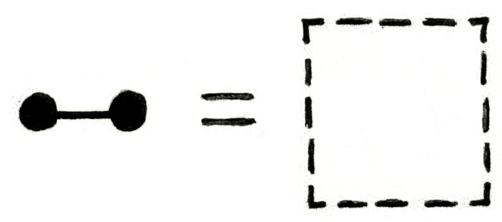}
\end{mathpar}
For the remaining interactions of $\triangleleft,\triangleright,\multimap,\multimapinv$ and $\blacktriangleleft,\blacktriangleright,\multimapdot,\multimapdotinv$ we require:
\begin{mathpar}
  \includegraphics[height=1cm,align=c]{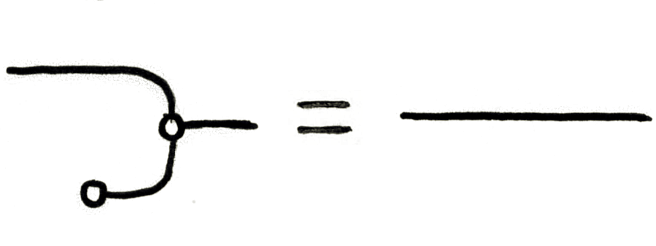}

  \includegraphics[height=1cm,align=c]{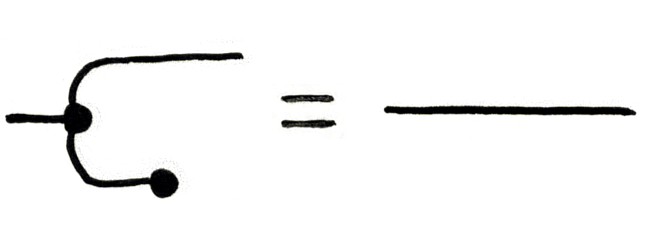}

  \includegraphics[height=1cm,align=c]{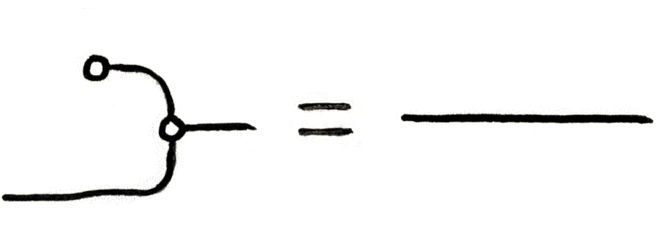}

  \includegraphics[height=1cm,align=c]{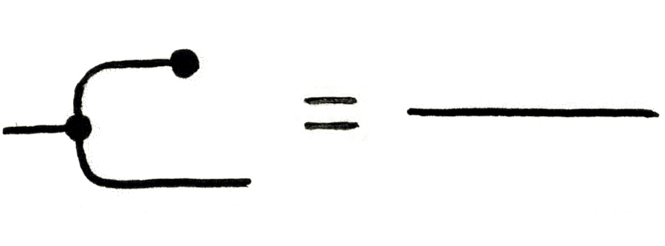}

  \includegraphics[height=1cm,align=c]{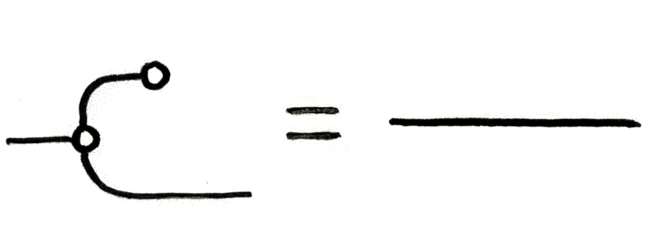}

  \includegraphics[height=1cm,align=c]{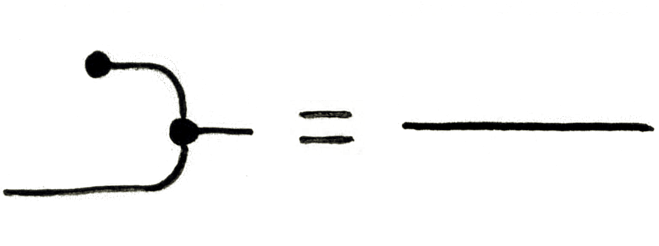}

  \includegraphics[height=1cm,align=c]{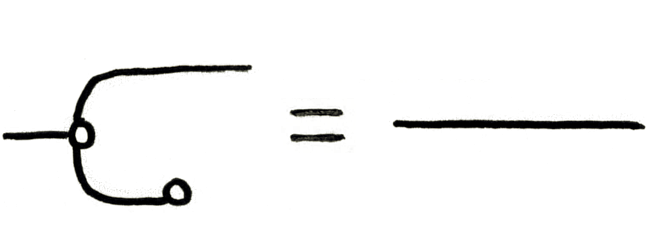}

  \includegraphics[height=1cm,align=c]{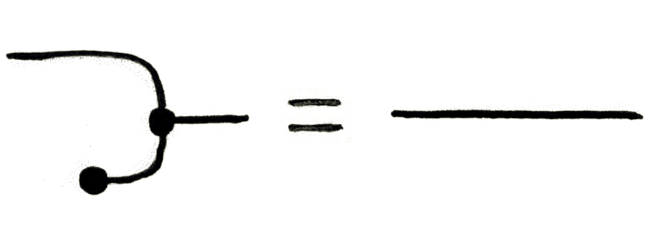}

  \includegraphics[height=1cm,align=c]{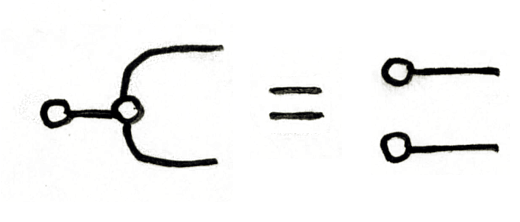}

  \includegraphics[height=1cm,align=c]{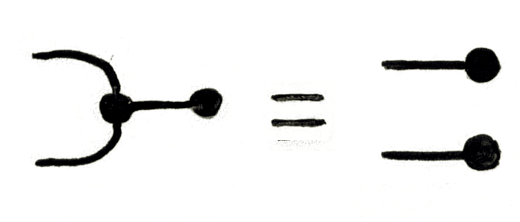}

  \includegraphics[height=1cm,align=c]{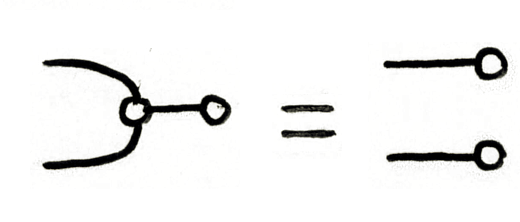}

  \includegraphics[height=1cm,align=c]{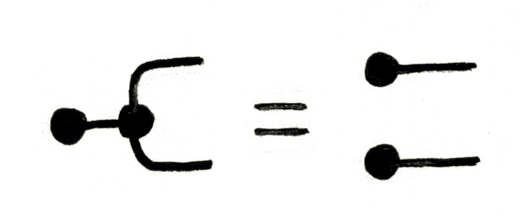}
\end{mathpar}
Next, the interaction between $\sigma\circ,\circ$ and $\sigma\bullet,\bullet$ is captured by:
\begin{mathpar}
  \includegraphics[height=1cm,align=c]{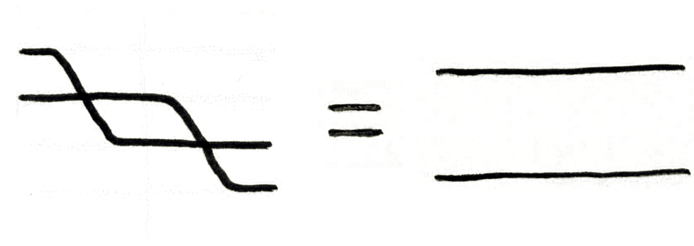}

  \includegraphics[height=1cm,align=c]{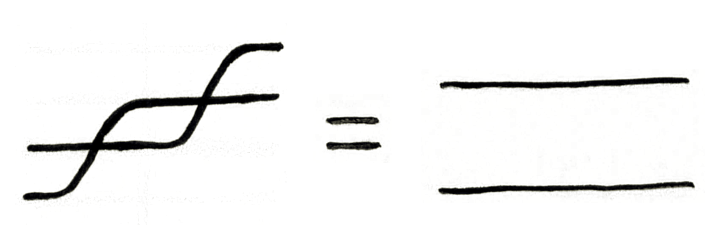}

  \includegraphics[height=1cm,align=c]{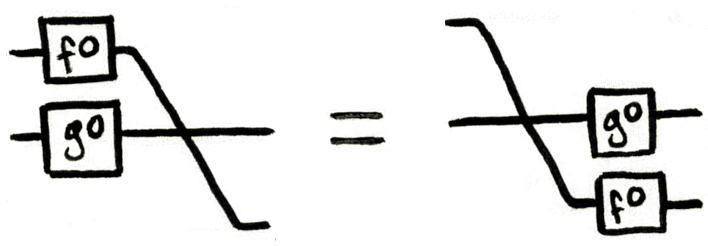}

  \includegraphics[height=1cm,align=c]{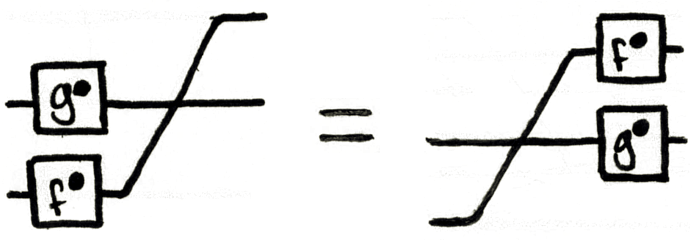}
\end{mathpar}
For the interaction between $\sigma\circ,\triangleleft,\triangleright$ and $\sigma\bullet,\blacktriangleleft,\blacktriangleright$ we require:
\begin{mathpar}
  \includegraphics[height=1cm,align=c]{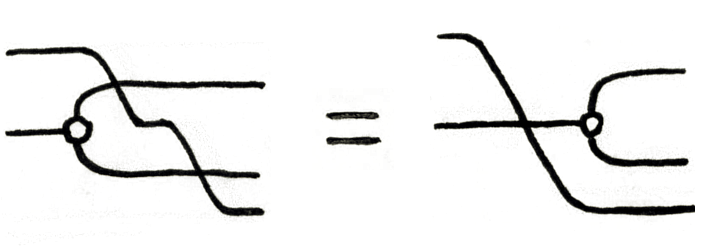}

  \includegraphics[height=1cm,align=c]{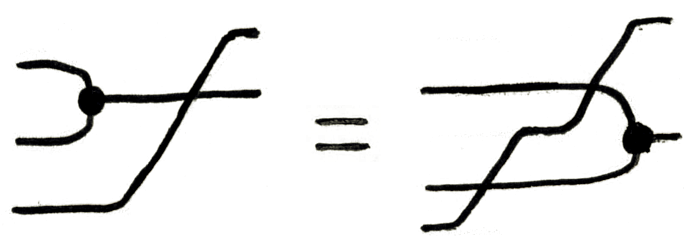}

  \includegraphics[height=1cm,align=c]{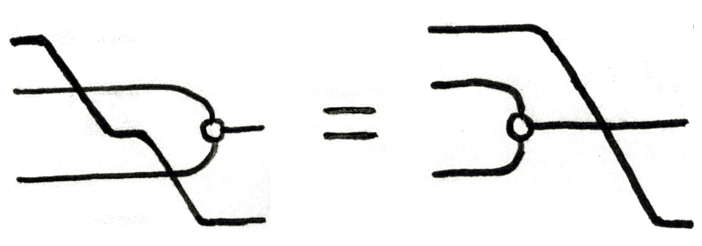}

  \includegraphics[height=1cm,align=c]{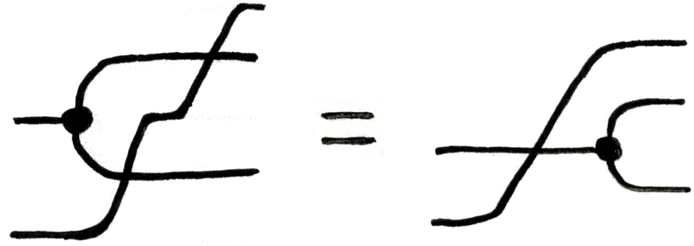}

  \includegraphics[height=1cm,align=c]{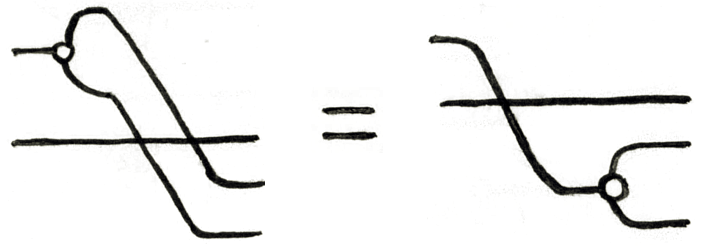}

  \includegraphics[height=1cm,align=c]{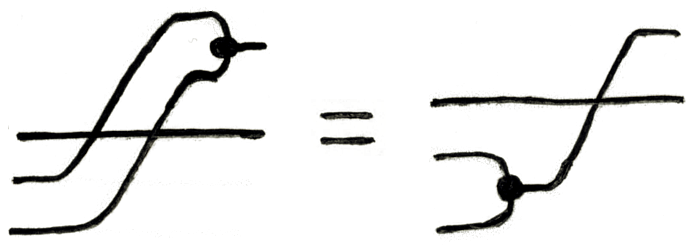}

  \includegraphics[height=1cm,align=c]{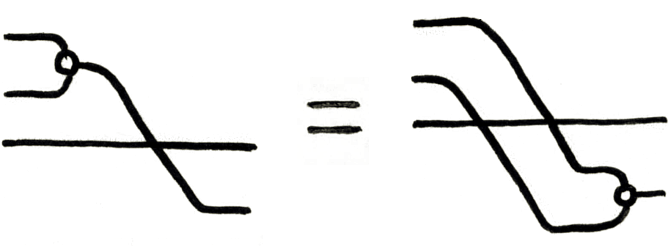}

  \includegraphics[height=1cm,align=c]{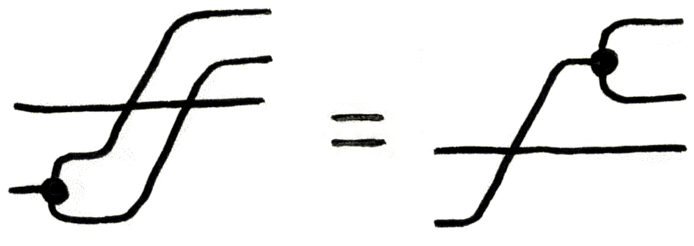}
\end{mathpar}
and the interaction between $\sigma\circ,\multimap,\multimapinv$ and $\sigma\bullet,\multimapdot,\multimapdotinv$ is captured by:
\begin{mathpar}
  \includegraphics[height=1cm,align=c]{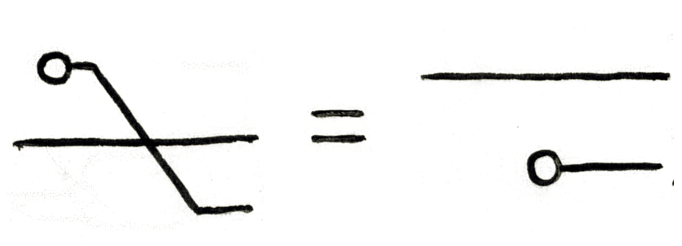}

  \includegraphics[height=1cm,align=c]{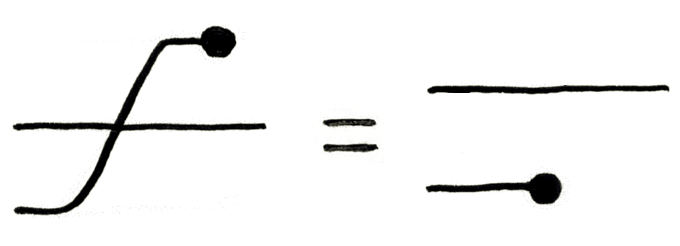}

  \includegraphics[height=1cm,align=c]{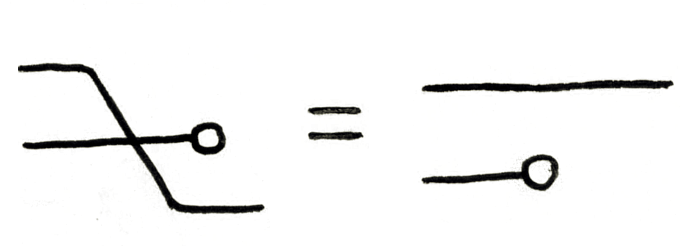}

  \includegraphics[height=1cm,align=c]{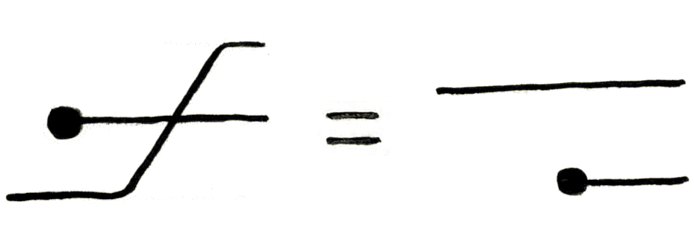}

  \includegraphics[height=1cm,align=c]{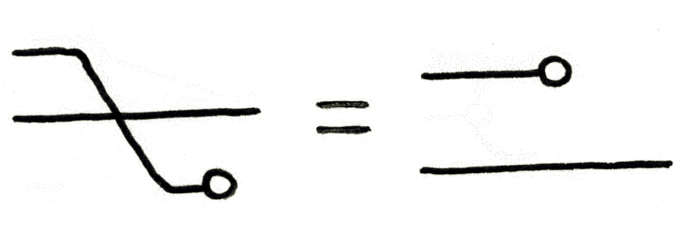}

  \includegraphics[height=1cm,align=c]{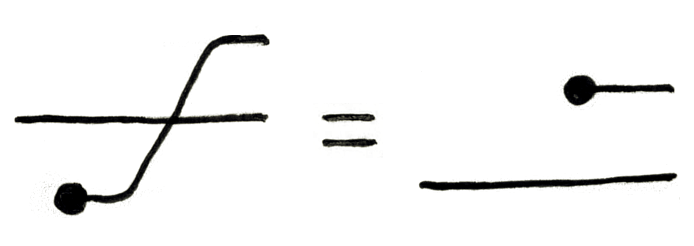}

  \includegraphics[height=1cm,align=c]{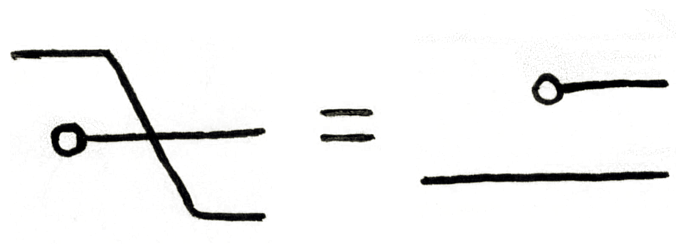}

  \includegraphics[height=1cm,align=c]{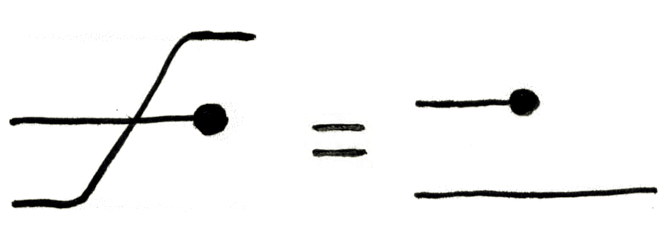}
\end{mathpar}
For the interaction between $\eta,\varepsilon,\circ,\bullet$ we require:
\begin{mathpar}
  \includegraphics[height=1cm,align=c]{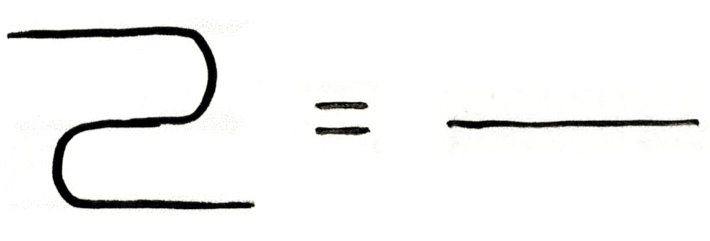}

  \includegraphics[height=1cm,align=c]{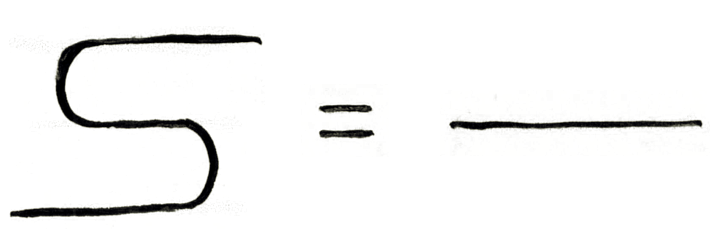}

  \includegraphics[height=1cm,align=c]{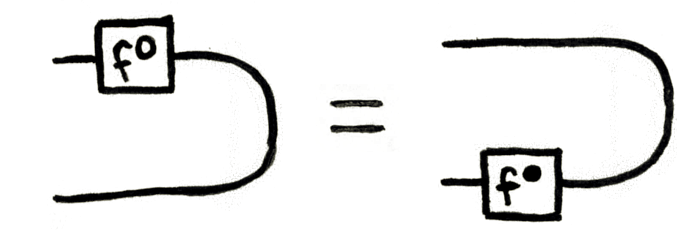}

  \includegraphics[height=1cm,align=c]{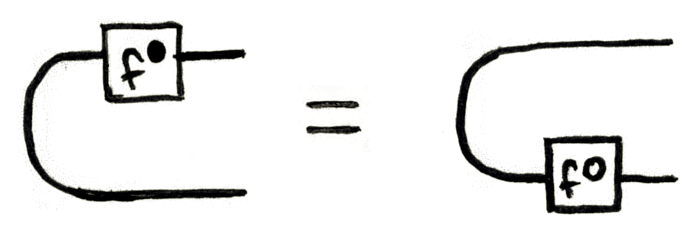}
\end{mathpar}
The interaction between $\sigma\circ,\sigma\bullet,\eta,\varepsilon$ is captured by:
\begin{mathpar}
  \includegraphics[height=1cm,align=c]{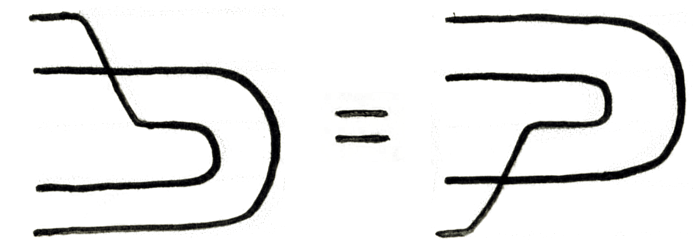}

  \includegraphics[height=1cm,align=c]{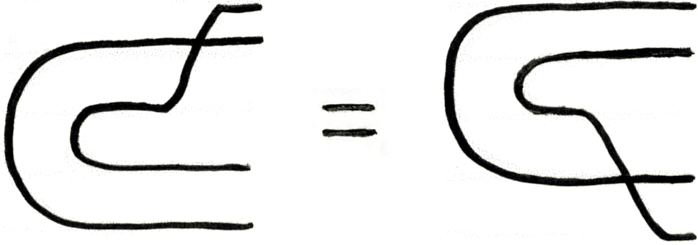}
\end{mathpar}
And for the interaction between $\eta,\varepsilon,\triangleleft,\triangleright,\blacktriangleleft,\blacktriangleright$ we ask that:
\begin{mathpar}
  \includegraphics[height=1cm,align=c]{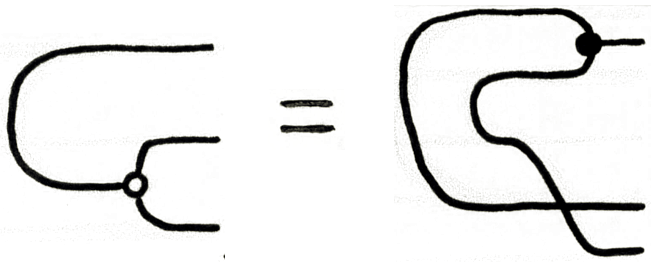}

  \includegraphics[height=1cm,align=c]{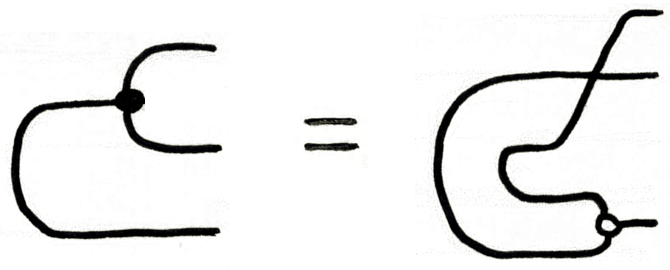}

  \includegraphics[height=1cm,align=c]{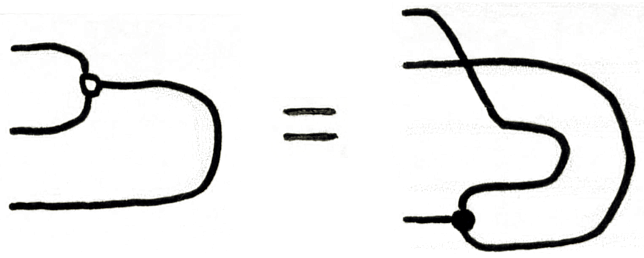}

  \includegraphics[height=1cm,align=c]{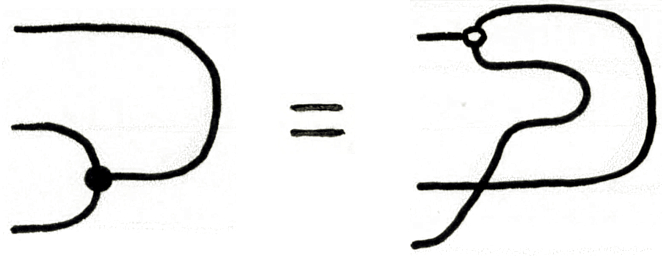}
\end{mathpar}
Finally, we require the following axioms concerning $f^\circ$ and $f^\bullet$:
\begin{mathpar}

  f^\circ g^\circ = (fg)^\circ

  \triangleleft (f^\circ \otimes g^\circ) \,\triangleright = (f \otimes g)^\circ

  (1_A)^\circ = 1_{A^\circ}

  \triangleright (\sigma_{A,B})^\circ \triangleleft = \sigma^\circ_{A,B}

  \\ 

  g^\bullet f^\bullet = (fg)^\bullet

  \blacktriangleleft \!(f^\bullet \otimes g^\bullet) \!\blacktriangleright\, = (f \otimes g)^\bullet

  (1_A)^\bullet = 1_{A^\bullet}

  \blacktriangleright \!(\sigma_{A,B})^\bullet \!\!\blacktriangleleft\, = \sigma^\bullet_{A,B}
\end{mathpar}
This concludes the presentation of $\mathsf{T}(\A)$. We proceed to define a strict monoidal functor $M : \mathsf{T}(\A) \to \bh\,\corner{\A}$ on objects by $M(X) = X$ (since $\mathsf{T}(\A)$ and $\bh\,\corner{\A}$ have the same objects) and on the generators by:
\begin{mathpar}
{
  M(f^\circ) =
  \includegraphics[height=1cm,align=c]{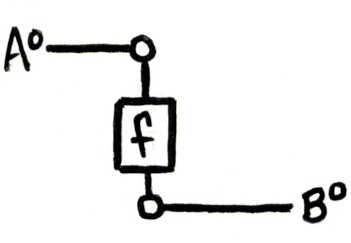}
}

{
  M(f^\bullet) =
  \includegraphics[height=1cm,align=c]{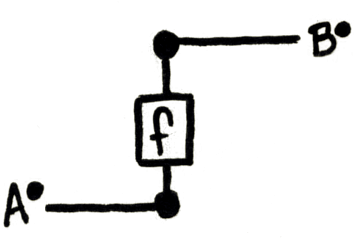}
}
\\
{
  M(\triangleleft_{A,B}) =
  \includegraphics[height=1cm,align=c]{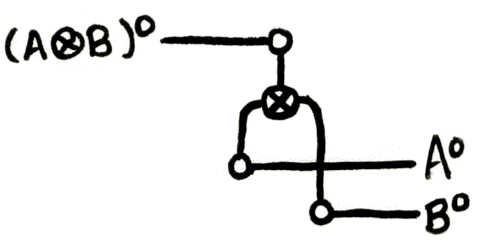}
}

{
  M(\blacktriangleright_{A,B}) =
  \includegraphics[height=1cm,align=c]{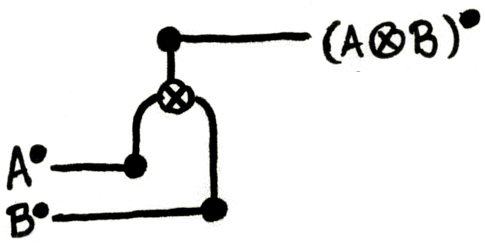}
}
\\
{
  M(\triangleright_{A,B}) =
  \includegraphics[height=1cm,align=c]{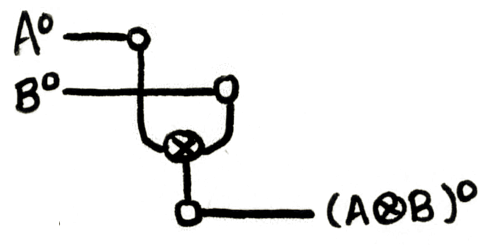}
}

{
  M(\blacktriangleleft_{A,B}) =
  \includegraphics[height=1cm,align=c]{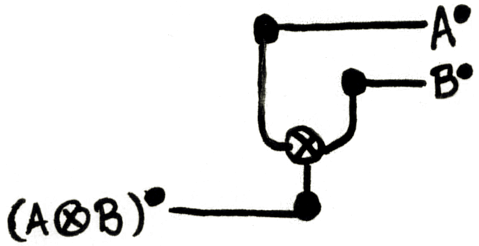}
}
\\
{
  M(\multimapinv) =
  \includegraphics[height=1cm,align=c]{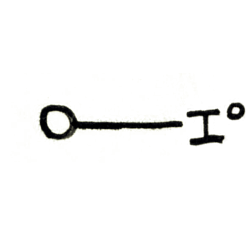}
}

{
  M(\multimap) =
  \includegraphics[height=1cm,align=c]{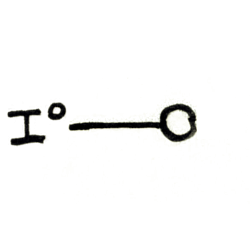}
}

{
  M(\multimapdot) =
  \includegraphics[height=1cm,align=c]{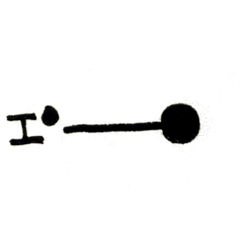}
}

{
  M(\multimapdotinv) =
  \includegraphics[height=1cm,align=c]{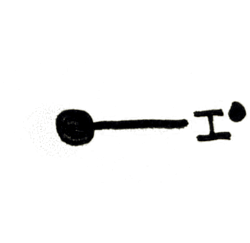}
}
\\
{
  M(\sigma_{A,B}^\circ) =
  \includegraphics[height=1cm,align=c]{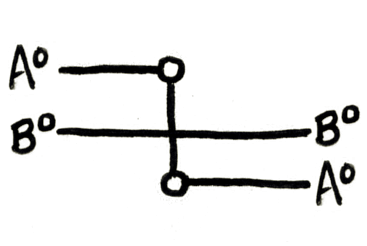}
}

{
  M(\sigma_{A,B}^\bullet) =
  \includegraphics[height=1cm,align=c]{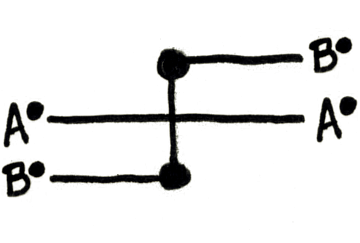}
}
\\
{
  M(\eta_A) =
  \includegraphics[height=1cm,align=c]{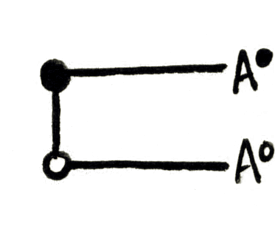}
}

{
  M(\varepsilon_A) =
  \includegraphics[height=1cm,align=c]{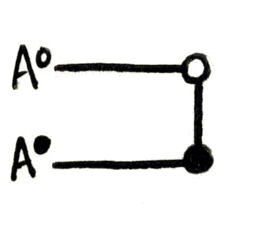}
}
\end{mathpar}
It is straightforward to verify that $M$ is a strict monoidal functor. Additionally, we have:
\begin{prop}
  $M : \mathsf{T}(\A) \to \bh\,\corner{\A}$ is full, faithful, and identity-on-objects.
\end{prop}
\begin{proof}
  That $M$ is full follows from the cohrerence theorem for string diagrams for proarrow equipments~\cite{Mye16}. Intuitively, every arrow of $\bh\,\corner{\A}$ is either in the image of $(-)^\circ$ or $(-)^\bullet$, or is built out of corner cells and crossing cells. Every horizontal cell of $\bh\,\corner{\A}$ that can be built out of only corner cells and does not decompose into multiple such cells is the image of one of the generators of $\mathsf{T}(\A)$, and so we know that $M$ is full. Perhaps surprising is that the horizontal cell $d^\circ_\bullet : A^\circ \otimes B^\bullet \to B^\bullet \otimes A^\circ$ of Observation~\ref{obs:halfsym} decomposes in this way, being the image under $M$ of the following morphism in $\mathsf{T}(\A)$:
\[
  \includegraphics[height=1.2cm,align=c]{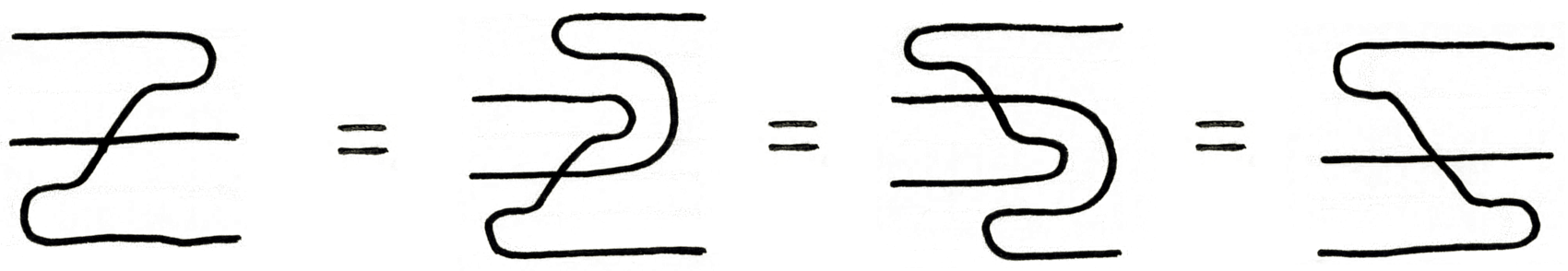}
\]
To show that $M$ is faithful is to show that the equations of $\mathsf{T}(\A)$ capture all equations between horizontal cells of $\corner{\A}$ when taken together with the equations of a spatial strict monoidal category. Recall that all of the equations of $\corner{\A}$ are generated by the yanking equations, along with any equations of $\A$. The yanking equations are local, in that each instance of one of the yanking equations involves exactly two cells of $\corner{\A}$, so we need only consider local interactions of cells of $\bh\,\corner{\A}$ in our analysis. It is relatively straightforward to verify that the defining equations of $\mathsf{T}(\A)$ are precisely the equations that arise in this way, and so $M$ is faithful.\footnote{Given the large number of equations involved, it is of course possible that we have missed some. That said, we are reasonably confident that the equations we have given are complete in this sense.} Finally, $M$ is clearly identity-on-objects.
\end{proof}
It follows that our axiomatization of $\bh\,\corner{\A}$ is correct. We record:
\begin{cor}\label{cor:axiomswork}
There is an isomorphism of categories $\bh\,\corner{\A} \cong \mathsf{T}(\A)$.
\end{cor}

\section{Conclusions and Future Work}\label{sec:conclusion}

We have shown how to decompose the material history of a process into concurrent components by working in the free cornering of an appropriate resource theory. We have explored the structure of the free cornering in light of this interpretation and found that it is consistent with our intuition about how this sort of thing ought to work. We do not claim to have solved all problems in the modelling of concurrency, but we feel that our formalism captures the material aspect of concurrent systems very  well.

We find it quite surprising that the structure required to model concurrent resource transformations is precisely the structure of a proarrow equipment. This structure is already known to be important in formal category theory, and we are appropriately intrigued by its apparent relevance to models of concurrency --- a far more concrete setting than the usual context in which one encounters proarrow equipments!

Further, we have considered categories of resource transducers that are induced by our construction. We have identified some structure they do and do not exhibit, and have provided a more direct axiomatization of them. We are not aware of any categories with similar structure, which we feel makes these categories of resource transducers worthy of further study, and of potential value as a counterexample.

There are of course many directions for future work. For one, it would be nice to connect the development here to the wider literature on concurrent processes. An obstacle to this is that the free cornering does not allow us to express branching or recursion, both of which feature heavily in more general theories of process communication. If we assume that our monoidal category $\A$ has binary coproducts then we may represent a limited sort of branching computation in which $(A + B)^\circ$ and $(A + B)^\bullet$ represent choices to be made by the left and right participant respectively, but this is less flexible than the protocol-level choice that one finds in e.g.\ session types or the nondeterminism of process calculi. We speculate that this is best approached through the ``situated transition systems'' introduced in~\cite{Nester2021Situated}, in which the concurrent resource transformations developed in~\cite{Nester2021Concurrent} (which this paper extends) are used to augment the category of spans of reflexive graphs --- interpreted as open transition systems~\cite{Kat97} --- to generate material history over some resource theory as transitions unfold in time. Alternatively, one might impose additional structure on the free cornering to allow nondeterministic choice and repetition.

Another direction for future work is to pursue the connection with the message passing logic of Cockett and Pastro~\cite{Coc09} (established in Proposition~\ref{prop:linearactegory}) and the wider programme of behavioural types influenced by linear logic including~\cite{Wad14} and~\cite{Caires2010}. Finally, the presence of proarrow equipments here is rather mysterious, and we wonder if some deeper reason for it might exist.

\bibliographystyle{alphaurl}
\bibliography{citations}

\end{document}